\newcommand{\copyrightnote}[2]{{\renewcommand{\thefootnote}{}
 \footnotetext{\small\it
\begin{flushleft}
 \copyright \ #1   #2  
\end{flushleft}}}}
\newcommand{\Name}[1]{\begin{flushleft}
                       \LARGE \bf #1
                       \end{flushleft}\vspace{-3mm}}
\newcommand{\Author}[1]{\begin{flushleft}
                       \it #1 \end{flushleft}}
\newcommand{\Address}[1]{\begin{flushleft}
                       \it #1 \end{flushleft}}
\newcommand{\Date}[1]{\begin{flushleft}
                      \small  \it #1 \end{flushleft}}
\newcommand{\evenhead}{Author \ name}
\newcommand{\oddhead}{Article \ name}
\renewcommand{\@evenhead}{
\hspace*{-3pt}\raisebox{-15pt}[\headheight][0pt]{\vbox{\hbox to \textwidth
{\thepage \hfil \evenhead}\vskip4pt \hrule}}}
\renewcommand{\@oddhead}{
\hspace*{-3pt}\raisebox{-15pt}[\headheight][0pt]{\vbox{\hbox to \textwidth
{\oddhead \hfil \thepage}\vskip4pt\hrule}}}
\renewcommand{\@evenfoot}{}
\renewcommand{\@oddfoot}{}
\long\def\@makecaption#1#2{%
  \vskip\abovecaptionskip
  \sbox\@tempboxa{\small \textbf{#1.}\ \ #2}%
  \ifdim \wd\@tempboxa >\hsize
    {\small \textbf{#1.}\ \ #2}\par
  \else
    \global \@minipagefalse
    \hb@xt@\hsize{\hfil\box\@tempboxa\hfil}%
  \fi
  \vskip\belowcaptionskip}
\newcommand{\JNMPnumberwithin}[3][\arabic]{%
  \@ifundefined{c@#2}{\@nocounterr{#2}}{%
    \@ifundefined{c@#3}{\@nocnterr{#3}}{%
      \@addtoreset{#2}{#3}%
      \@xp\xdef\csname the#2\endcsname{%
        \@xp\@nx\csname the#3\endcsname .\@nx#1{#2}}}}%
}
\renewenvironment{proof}[1][\proofname]{\par
  \normalfont
  \topsep6\p@\@plus6\p@ \trivlist
  \item[\hskip\labelsep\textbf{%
    #1\@addpunct{.}}]\ignorespaces
}{%
  \qed\endtrivlist
}
\newcommand{\resetfootnoterule} {
  \renewcommand\footnoterule{%
  \kern-3\p@
  \hrule\@width.4\columnwidth
  \kern2.6\p@}
}
\renewcommand{\footnoterule}{}
\theoremstyle{definition}
\newtheorem*{definition}{Definition}
\newtheorem{lemma}{Lemma}
\newtheorem{proposition}{Proposition}
\newtheorem{theorem}{Theorem}
\newtheorem{corollary}{Corollary}
\begin{document}

\renewcommand{\evenhead}{ {\LARGE\textcolor{blue!10!black!40!green}{{\sf \ \ \ ]ocnmp[}}}\strut\hfill 
Y. Alipour Fakhri}
\renewcommand{\oddhead}{ {\LARGE\textcolor{blue!10!black!40!green}{{\sf ]ocnmp[}}}\ \ \ \ \   
Finsler geometry in anisotropic superconductivity}

%%%% Matter for the first page 
\thispagestyle{empty}
\newcommand{\FistPageHead}[3]{
\begin{flushleft}
\raisebox{8mm}[0pt][0pt]
{\footnotesize \sf
\parbox{150mm}{{Open Communications in Nonlinear Mathematical Physics}\ \  \ {\LARGE\textcolor{blue!10!black!40!green}{]ocnmp[}}
\ \ Vol.5 (2025) pp
#2\hfill {\sc #3}}}\vspace{-13mm}
\end{flushleft}}

\FistPageHead{1}{\pageref{firstpage}--\pageref{lastpage}}{ \ \ Article}

\strut\hfill

\strut\hfill

\copyrightnote{The author(s). Distributed under a Creative Commons Attribution 4.0 International License}

\Name{Finsler Geometry in Anisotropic Superconductivity: A Ginzburg--Landau Approach}

\Author{Y. Alipour Fakhri}

\Address{Faculty of Basic Sciences,
Department of Mathematics, Payame Noor University, Tehran, Iran.  E-mail: y\_ alipour@pnu.ac.ir}

\Date{Received October 23, 2025; Accepted October 28, 2025}

\setcounter{equation}{0}

\smallskip

\noindent
{\bf Citation format for this Article:}\newline
Y. Alipour Fakhri, Finsler geometry in anisotropic superconductivity: a Ginzburg-Landau approach,
{\it Open Commun. Nonlinear Math. Phys.}, {\bf 5}, ocnmp:16773, \pageref{firstpage}--\pageref{lastpage}, 2025.

\strut\hfill

\noindent
{\bf The permanent Digital Object Identifier (DOI) for this Article:}\newline
{\it 10.46298/ocnmp.16773}

\strut\hfill

\begin{abstract}
\noindent 
We present a rigorous generalization of the classical Ginzburg--Landau model to smooth, compact Finsler manifolds without boundary. This framework provides a natural analytic setting for describing anisotropic superconductivity within Finsler geometry. The model is constructed via the Finsler--Laplacian, defined through the Legendre transform associated with the fundamental function $F$, and by employing canonical Finsler measures such as the Busemann--Hausdorff and Holmes--Thompson volume forms. 
We introduce an anisotropic Ginzburg--Landau functional for complex scalar fields coupled to gauge potentials and establish the existence of minimizers in the appropriate Finsler--Sobolev spaces by the direct method in the calculus of variations. 
Furthermore, we analyze the asymptotic regime as the Ginzburg--Landau parameter $\varepsilon \to 0$ and prove a precise $\Gamma$--convergence result: the rescaled energies converge to the Finslerian length functional associated with the limiting vortex filaments. 
In particular, the limiting vortex energy is shown to equal $\pi$ times the Finslerian length of the corresponding current, thereby extending the classical Bethuel--Brezis--H\'elein result to anisotropic settings. 
These findings demonstrate that Finsler geometry unifies metric anisotropy and variational principles in gauge-field models, broadening the geometric scope of the Ginzburg--Landau theory beyond the Riemannian framework.
\end{abstract}

\label{firstpage}

%%%% The Article text starts here
%===================sec:Introduction======
\section{Introduction}\label{sec:Introduction}
%=========================================
The Ginzburg--Landau (GL) theory, introduced in the seminal work of Ginzburg and Landau \cite{GL1950}, has played a central role in mathematical physics and geometric analysis. In its classical form on a Euclidean or Riemannian manifold $(M,g)$, the model couples a complex order parameter to a gauge potential through a variational energy, (see, e.g., the monographs \cite{BBH1994,SandierSerfaty2007}) for a comprehensive
mathematical treatment including vortex structures, energy asymptotics, and compactness.

Recent developments in differential geometry have highlighted \emph{Finsler geometry} as a natural non quadratic extension of Riemannian structures, where the metric dependence on
directions is encoded by a strongly convex norm $F(x,\cdot)$ on each tangent space $T_xM$ (see \cite{BaoChernShen2000,Shen2001}). Analytic tools suitable for this setting such as the Finslerian gradient, divergence, Laplacian, and Sobolev spaces have been developed in, for instance, \cite{OhtaSturm2009,Barthelme2016}.
From the viewpoint of applications, anisotropy is intrinsic in layered superconductors and related media, suggesting that a GL-type theory on Finsler manifolds is a natural
framework for modeling direction-dependent phenomena.

\paragraph{Contributions.}
In this paper we formulate and analyze a Finslerian version of the GL model. Our main
contributions are as follows:
\begin{itemize}
\item[(i)] 
We define an anisotropic GL functional on a smooth Finsler manifold $(M,F)$ using the Finsler--Laplacian (via the Legendre transform) together with a canonical Finsler measure (Busemann--Hausdorff or Holmes--Thompson). The model couples complex scalar fields to $U(1)$-gauge potentials and is invariant under the natural gauge action.

\item[(ii)] 
We establish the \emph{existence of minimizers} in appropriate
Finsler--Sobolev spaces. The proof follows the direct method of the calculus of variations, relying on coercivity, weak lower semicontinuity induced by the convexity of $F^*(x,\cdot)^2$, compact embeddings on compact manifolds, and a Coulomb gauge
fixing based on a background Riemannian co-metric uniformly equivalent to $F^*$.

\item[(iii)] 
We investigate the asymptotic regime $\varepsilon\to 0$ and prove a $\Gamma$--convergence result, after the usual $|\log\varepsilon|$ rescaling, the energies converge to the Finslerian length of rectifiable $1$-currents representing
vortex filaments. The analysis adapts the ball construction and lower bound techniques of Jerrard--Sandier \cite{JerrardSandier2000} to the anisotropic setting, together
with compactness/rectifiability tools from geometric measure theory (see \cite{FedererFleming1960}) and the classical GL scheme in \cite{BBH1994,SandierSerfaty2007}.
\end{itemize}

\paragraph{Standing assumptions and notation.}

Throughout the paper, $(M,F)$ denotes a compact smooth Finsler manifold (without boundary, unless stated otherwise). We write $F^*(x,\cdot)$ for the co-metric on $T_x^*M$ induced by
the Legendre transform, and $d\mu_F$ for a fixed smooth Finsler measure; when not specified we adopt the Busemann--Hausdorff measure. A smooth Riemannian co--metric $\gamma^*$, uniformly equivalent to $F^*$, is used to formulate Hodge operators for the Maxwell term. All function spaces are the Finsler--Sobolev spaces $H^1_F$ built upon $F^*$ and $d\mu_F$, and the gauge is fixed to Coulomb form when needed.

\paragraph{Organization of the paper.}
Section \ref{sec:prelim} collects the necessary background on Finsler analysis. In Section \ref{sec:functional} we introduce the Finslerian GL functional and its basic properties, including gauge invariance and well-posedness on $H^1_F$. The existence of
minimizers is proved in Section \ref{sec:existence}. Section \ref{sec:gamma} is devoted to the asymptotic analysis as $\varepsilon\to 0$, culminating in the $\Gamma$--limit
characterization of vortex filaments via the Finsler length functional.

%=============================================
\section{Preliminaries on Finsler Geometry}
\label{sec:prelim}
%=============================================

In this section we collect the analytic and geometric tools used later. Throughout, $M$ denotes a smooth, connected, compact $n$--manifold without boundary. All statements below extend to manifolds with smooth boundary under standard trace
assumptions; see the remarks at the end of the section.

%\subsection{Finsler structures, fundamental tensor, and duality}
%\label{subsec:finsler-duality}

A smooth, strongly convex \emph{Finsler structure} on $M$ is a continuous function $F:TM\to[0,\infty)$ such that:
\begin{enumerate}
\item 
$F$ is $C^\infty$ on $TM\setminus\{0\}$,
  
\item 
$F(x,\lambda y)=\lambda F(x,y)$ for all $\lambda>0$,
\item 
for each $(x,y)\in TM\setminus\{0\}$, the \emph{fundamental tensor}
\begin{align*}
g_{ij}(x,y)\ :=\ \frac{1}{2}\,\frac{\partial^2\!\big(F(x,y)^2\big)}{\partial y^i\,\partial y^j}
\end{align*}
is positive definite.
\end{enumerate}
We write $F(x,\cdot)$ for the Minkowski norm on $T_xM$. The \emph{dual norm} $F^*:T^*M\to[0,\infty)$
is defined for $\xi\in T_x^*M$ by
\begin{align*}
F^*(x,\xi)\ :=\ \sup\{\, \xi(v)\ :\ v\in T_xM,\ F(x,v)\le 1\,\}.
\end{align*}
Define the \emph{Legendre map} $L:T M\setminus\{0\}\to T^* M\setminus\{0\}$ fiberwise by
\begin{equation}\label{eq:Legendre}
L_x(y)\ :=\ \partial_y\!\Big(\frac 12 F(x,y)^2\Big)
\ =\ g_{ij}(x,y)\,y^j\,dx^i.
\end{equation}

\begin{proposition}\label{prop:Legendre-diffeo}
%[Legendre diffeomorphism and basic identities]
For each $x\in M$, $L_x:T_xM\setminus\{0\}\to T_x^*M\setminus\{0\}$ is a $C^\infty$ diffeomorphism.
Its inverse is given by $L_x^{-1}(\xi)=\partial_\xi\!\big(\tfrac 12 F^*(x,\xi)^2\big)$.
Moreover, for all $y\neq 0$ and $\xi\neq 0$,
\begin{equation}\label{eq:F-Fstar-compat}
F^*\!\big(x,L_x(y)\big)\ =\ F(x,y),\qquad
F\!\big(x,L_x^{-1}(\xi)\big)\ =\ F^*(x,\xi),
\end{equation}
and the Fenchel--Young relation holds:
\begin{equation}\label{eq:Fenchel-Young}
\frac 12 F(x,y)^2 + \frac 12 F^*(x,\xi)^2 \ \ge\ \xi(y).
\end{equation}
Equality hold if and only if $\xi=L_x(y)$, equivalently, $y=L_x^{-1}(\xi)$.
\end{proposition}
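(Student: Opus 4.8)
The plan is to work fiberwise at a fixed $x$ and to recognize the whole statement as the standard duality between the smooth strongly convex function $h(y):=\tfrac12 F(x,y)^2$ and its Legendre--Fenchel conjugate, which I will identify with $\tfrac12 F^*(x,\cdot)^2$. I suppress the point $x$. First I record the structural facts: by degree-$1$ homogeneity of $F$ the function $h$ is $2$-homogeneous, convex on all of $T_xM\cong\mathbb{R}^n$, of class $C^\infty$ on $\mathbb{R}^n\setminus\{0\}$ and $C^1$ at the origin, with Hessian equal to the fundamental tensor $g_{ij}(y)$, which is positive definite away from $0$. Euler's relation gives $\langle\nabla h(y),y\rangle=2h(y)=F(y)^2$ and shows that $\nabla h=L$ is $1$-homogeneous. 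From positive definiteness of the Hessian together with the $2$-homogeneity along rays I get that $h$ is \emph{strictly} convex on $\mathbb{R}^n$, so $L=\nabla h$ is injective; moreover $\nabla h(y)=0$ forces $F(y)^2=\langle\nabla h(y),y\rangle=0$, hence $y=0$, so $L$ maps $\mathbb{R}^n\setminus\{0\}$ into $T_x^*M\setminus\{0\}$.

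Next I establish the diffeomorphism property. The Jacobian of $L$ at $y\neq 0$ is exactly $g_{ij}(y)$, which is invertible, so by the inverse function theorem $L$ is a local $C^\infty$ diffeomorphism; combined with the global injectivity above it is a $C^\infty$ diffeomorphism onto its image. For surjectivity onto $T_x^*M\setminus\{0\}$ I use coercivity: since $F$ is a norm, $h$ is superlinear, so for any $\xi$ the strictly convex coercive function $y\mapsto h(y)-\xi(y)$ attains a minimum, at which $\nabla h(y)=\xi$; thus $L$ is onto $\mathbb{R}^n$, and onto $\mathbb{R}^n\setminus\{0\}$ by the previous paragraph.

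I then identify the conjugate and read off the remaining claims. Computing $h^*(\xi):=\sup_y[\xi(y)-h(y)]$ by writing $y=tv$ with $F(v)=1$ and $t\ge0$, the inner maximization in $t$ gives $h^*(\xi)=\tfrac12\big(\sup_{F(v)\le1}\xi(v)\big)^2=\tfrac12 F^*(\xi)^2$, using the definition of $F^*$. The Fenchel--Young inequality $h(y)+h^*(\xi)\ge\xi(y)$ is immediate from the definition of $h^*$ and is precisely \eqref{eq:Fenchel-Young}; equality holds iff $\xi$ lies in the subdifferential of $h$ at $y$, which for the differentiable strictly convex $h$ means $\xi=\nabla h(y)=L_x(y)$. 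Because $h$ is $C^\infty$ away from $0$ with invertible derivative, the standard duality $\nabla h^*=(\nabla h)^{-1}$ yields the claimed inverse formula $L_x^{-1}(\xi)=\partial_\xi\!\big(\tfrac12 F^*(x,\xi)^2\big)$. Finally, inserting $\xi=L(y)$ into the equality case and using Euler's relation $\xi(y)=2h(y)=F(y)^2$ gives $\tfrac12 F(y)^2+\tfrac12 F^*(L(y))^2=F(y)^2$, hence $F^*(L(y))=F(y)$; the dual identity follows symmetrically, establishing \eqref{eq:F-Fstar-compat}.

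The step I expect to require the most care is the interface between convex analysis and smoothness at the origin: the conjugate $h^*$ is a priori only convex and $C^1$, and to write $L_x^{-1}=\nabla h^*$ as a genuine $C^\infty$ map one must know that $h^*$ is smooth on $T_x^*M\setminus\{0\}$ with positive-definite Hessian. This is the mirror of the strong-convexity hypothesis on $F$, and it is cleanest to obtain it directly from the inverse function theorem applied to $L$ (so that $L^{-1}$ is automatically $C^\infty$) rather than from abstract duality, and then to verify a posteriori that this smooth inverse coincides with $\nabla h^*$ via the equality case of Fenchel--Young.
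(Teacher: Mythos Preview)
Your proof is correct and follows essentially the same approach as the paper's: both fix $x$, set $\Phi(y)=\tfrac12 F(x,y)^2$, use positive definiteness of the fundamental tensor plus the inverse function theorem for the local diffeomorphism, strict convexity and superlinearity for the global statement, and then identify the Legendre--Fenchel conjugate with $\tfrac12 F^*(x,\cdot)^2$ to read off the inverse formula, the compatibility identities, and the Fenchel--Young inequality. Your version is somewhat more explicit (the polar computation of $h^*$, the Euler-relation derivation of \eqref{eq:F-Fstar-compat}, and the careful remark on smoothness of $h^*$ at the origin), but the underlying strategy is identical.
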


\begin{proof}
Fix $x$ and set $\Phi(y):=\frac 12 F(x,y)^2$. By strong convexity, the Hessian $\partial_{y}^2\Phi(y)=g(x,y)$ is positive definite for $y\neq 0$. Hence $\nabla_y\Phi=L_x$ has everywhere invertible differential on $T_xM\setminus\{0\}$, by the inverse function theorem $L_x$ is a local diffeomorphism. Since $\Phi$ is strictly convex and superlinear, $L_x$ is injective
and proper; therefore it is a global diffeomorphism onto its image, which is $T_x^*M\setminus\{0\}$.
Define $\Psi(\xi):=\sup_{y\neq 0}\{\xi(y)-\Phi(y)\}$, the Legendre transform of $\Phi$, standard convex
duality gives $\Psi(\xi)=\frac 12 F^*(x,\xi)^2$ and $\nabla_\xi\Psi=L_x^{-1}$. The identities
\eqref{eq:F-Fstar-compat} and \eqref{eq:Fenchel-Young} are the usual equality cases in Fenchel duality, using strict convexity and $1$--homogeneity of $F$ and $F^*$.
\end{proof}

\begin{definition}\label{def:gradF}
%[Finsler gradient]
For $u\in C^\infty(M)$, the \emph{Finsler gradient} $\nabla_F u(x)\in T_xM$ is defined by
\begin{equation}\label{eq:gradF}
du\ =\ L_x\big(\nabla_F u(x)\big).
\end{equation}
Equivalently, $\nabla_F u(x)=\partial_\xi\big(\frac 12 F^*(x,du_x)^2\big)$. By \eqref{eq:F-Fstar-compat} we have $F\!\big(x,\nabla_F u(x)\big)=F^*(x,du_x)$.
\end{definition}

%\subsection{Canonical Finsler measures and divergence}
%\label{subsec:measures-div}

Two canonical smooth measures on $(M,F)$ will be used:

\paragraph{Busemann--Hausdorff.}
For $x\in M$ let $B_F(x):=\{y\in T_xM:\ F(x,y)<1\}$ and $B^n\subset\mathbb{R}^n$ the Euclidean unit ball.
The Busemann--Hausdorff measure is
\begin{equation}\label{eq:BH}
d\mu_{BH}(x)\ :=\ \frac{\mathrm{vol}(B^n)}{\mathrm{vol}\big(B_F(x)\big)}\,dx^1\wedge\cdots\wedge dx^n.
\end{equation}

\paragraph{Holmes--Thompson.}
Let $B^*_F(x):=\{\xi\in T_x^*M:\ F^*(x,\xi)<1\}$. The Holmes--Thompson measure is
\begin{equation}\label{eq:HT}
d\mu_{HT}(x)\ :=\
\frac{1}{\mathrm{vol}(B^n)}\!\int_{B^*_F(x)}\! d\xi^1\cdots d\xi^n\,dx^1\wedge\cdots\wedge dx^n.
\end{equation}
In what follows we fix once and for all a smooth Finsler measure $d\mu_F$ chosen among $\{\mu_{BH},\mu_{HT}\}$ and write
\begin{equation}\label{eq:mu-density}
d\mu_F\ =\ \sigma(x)\,dx^1\wedge\cdots\wedge dx^n,\qquad \sigma\in C^\infty(M),\ \sigma>0.
\end{equation}

\begin{definition}\label{def:div-muF}
%[Divergence with respect to $d\mu_F$]
For a $C^1$ vector field $X=X^i\partial_i$ define
\begin{equation}\label{eq:divergence}
\mathrm{div}_{\mu_F} X\ :=\ \frac{1}{\sigma(x)}\,\partial_i\!\big(\sigma(x)\,X^i\big).
\end{equation}
\end{definition}

\begin{lemma}\label{lem:ibp}
%[Integration by parts]
For $u\in C^\infty(M)$ and $X\in C^1(TM)$,
\begin{equation}\label{eq:ibp}
\int_M \! du(X)\,d\mu_F\ =\ -\int_M \! u\ \mathrm{div}_{\mu_F}X\,d\mu_F.
\end{equation}
\end{lemma}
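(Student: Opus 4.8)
The plan is to reduce \eqref{eq:ibp} to the Leibniz rule for $\mathrm{div}_{\mu_F}$ together with the fundamental divergence theorem on a closed manifold. First I would verify the product rule
$\mathrm{div}_{\mu_F}(uX)=du(X)+u\,\mathrm{div}_{\mu_F}X$
directly from \eqref{eq:divergence}: expanding $\tfrac{1}{\sigma}\partial_i(\sigma u X^i)$ by the ordinary Leibniz rule gives $\partial_i u\,X^i+u\,\tfrac{1}{\sigma}\partial_i(\sigma X^i)$, and since $du(X)=\partial_i u\,X^i$ this is exactly the asserted identity. Rearranging and integrating against $d\mu_F$ then shows that \eqref{eq:ibp} is equivalent to the single claim $\int_M \mathrm{div}_{\mu_F}(uX)\,d\mu_F=0$.

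Next I would establish this vanishing by identifying $\mathrm{div}_{\mu_F}$ with the intrinsic divergence attached to the top form $d\mu_F$. For any $C^1$ field $Y=Y^i\partial_i$, contracting with $d\mu_F=\sigma\,dx^1\wedge\cdots\wedge dx^n$ and taking the exterior derivative yields $d(\iota_Y\,d\mu_F)=\partial_i(\sigma Y^i)\,dx^1\wedge\cdots\wedge dx^n=(\mathrm{div}_{\mu_F}Y)\,d\mu_F$; since $d\mu_F$ is of top degree, Cartan's formula also gives $\mathcal{L}_Y\,d\mu_F=d(\iota_Y\,d\mu_F)$. This step does double duty: it confirms that the right-hand side of \eqref{eq:divergence}, although written in a chart, assembles into a globally well-defined $n$-form, so the integral is meaningful. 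Taking $Y=uX$ and applying Stokes' theorem gives $\int_M d(\iota_{uX}\,d\mu_F)=\int_{\partial M}\iota_{uX}\,d\mu_F$, which vanishes because $M$ is compact without boundary. Combining with the first paragraph yields \eqref{eq:ibp}.

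I do not expect a genuine obstacle, since the content is the classical divergence theorem; the only point demanding care is the \emph{globalization}, as the defining formula \eqref{eq:divergence} looks chart-dependent. I would therefore make explicit the identity $(\mathrm{div}_{\mu_F}Y)\,d\mu_F=\mathcal{L}_Y\,d\mu_F$ before invoking Stokes. A fully self-contained alternative, avoiding any appeal to intrinsic differential-form machinery, is to fix a finite atlas with a subordinate partition of unity $\{\chi_\alpha\}$, write $uX=\sum_\alpha \chi_\alpha\,uX$, and on each chart integrate the total derivative $\partial_i\big(\sigma\,\chi_\alpha u X^i\big)$ of a compactly supported function over $\mathbb{R}^n$; each such integral is zero, and summing recovers $\int_M \mathrm{div}_{\mu_F}(uX)\,d\mu_F=0$.
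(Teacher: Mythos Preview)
Your argument is correct. The paper itself states Lemma~\ref{lem:ibp} without proof, treating it as a standard integration-by-parts identity; your proposal supplies exactly the missing justification, and both routes you outline (the intrinsic one via $(\mathrm{div}_{\mu_F}Y)\,d\mu_F=d(\iota_Y\,d\mu_F)$ followed by Stokes, and the partition-of-unity reduction to compactly supported coordinate integrals) are valid and complete.
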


\begin{definition}\label{def:laplacian}
%[Finsler Laplacian]
For $u\in C^\infty(M)$, the \emph{Finsler Laplacian} associated with $d\mu_F$ is
\begin{equation}\label{eq:laplacian}
\Delta^F_{\mu_F} u\ :=\ \mathrm{div}_{\mu_F}\big(\nabla_F u\big).
\end{equation}
In local coordinates, combining \eqref{eq:divergence} and \eqref{eq:gradF} yields
\begin{equation}\label{eq:laplacian-local}
   \Delta^F_{\mu_F} u\ =\ \frac{1}{\sigma(x)}\,
   \partial_i\!\Big(\sigma(x)\, g^{ij}\!\big(x,\nabla_F u(x)\big)\,\partial_j u\Big),
\end{equation}
where $g^{ij}(x,\cdot)$ denotes the inverse matrix of $g_{ij}(x,\cdot)$ evaluated at $y=\nabla_F u(x)$,
i.e. via the Legendre correspondence $du=L_x(y)$.
\end{definition}

\begin{definition}\label{def:dirichlet}
%[Dirichlet energy]
The Finsler Dirichlet energy of $u\in C^\infty(M)$ is
\begin{equation}\label{eq:dirichlet-energy}
\mathcal{E}_F[u]\ :=\ \frac 12\int_M F^*\!\big(x,du\big)^2\, d\mu_F.
\end{equation}
\end{definition}

\begin{theorem}\label{thm:EL-Dirichlet}
%[Euler--Lagrange equation for $\mathcal{E}_F$]
For $u\in C^\infty(M)$ and $\varphi\in C^\infty(M)$,
\begin{align*}
\frac{d}{dt}\,\mathcal{E}_F[u+t\varphi]\Big|_{t=0}
\ =\ -\int_M \varphi\, \Delta^F_{\mu_F}u\, d\mu_F.
\end{align*}
Equivalently, the critical points of $\mathcal{E}_F$ are the (weak) solutions of
$\Delta^F_{\mu_F} u=0$.
\end{theorem}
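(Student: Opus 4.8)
The plan is to differentiate the energy density pointwise using the Legendre structure of Proposition~\ref{prop:Legendre-diffeo}, interchange the $t$-derivative with the integral, and then discharge the resulting expression with the integration-by-parts identity of Lemma~\ref{lem:ibp}. To set up, write $\Psi(x,\xi) := \tfrac12 F^*(x,\xi)^2$ for the Legendre dual of $\tfrac12 F(x,\cdot)^2$. By Proposition~\ref{prop:Legendre-diffeo} one has $\partial_\xi \Psi(x,\xi) = L_x^{-1}(\xi)$; evaluating at $\xi = du_x$ and invoking Definition~\ref{def:gradF} gives $\partial_\xi \Psi(x, du_x) = L_x^{-1}(du_x) = \nabla_F u(x)$. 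Thus the vector field $\nabla_F u$ is exactly the $\xi$-gradient of the dual energy density along $du$.

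First I would compute the pointwise derivative. For fixed $x$, the map $t \mapsto \Psi(x, du_x + t\, d\varphi_x)$ is differentiable, and by the chain rule
\[
\frac{d}{dt}\Big|_{t=0}\Psi(x, du_x + t\,d\varphi_x) = d\varphi_x\big(\partial_\xi\Psi(x,du_x)\big) = d\varphi_x\big(\nabla_F u(x)\big),
\]
where the pairing is between the covector $d\varphi_x \in T_x^*M$ and the vector $\partial_\xi\Psi \in T_xM$. Next I would justify differentiating under the integral sign: since $M$ is compact and $\Psi$ is a $C^1$ function on all of $T^*M$ (its $\xi$-gradient $2F^*\,\partial_\xi F^*$ extends continuously across the zero section), the difference quotients $t^{-1}\big(\Psi(x,du_x+t\,d\varphi_x)-\Psi(x,du_x)\big)$ are uniformly bounded for $x\in M$ and $t$ near $0$, so dominated convergence applies and
\[
\frac{d}{dt}\,\mathcal{E}_F[u+t\varphi]\Big|_{t=0} = \int_M d\varphi\big(\nabla_F u\big)\, d\mu_F .
\]

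Finally, applying Lemma~\ref{lem:ibp} with the choice $X = \nabla_F u$ transfers the derivative off $\varphi$:
\[
\int_M d\varphi\big(\nabla_F u\big)\, d\mu_F = -\int_M \varphi\, \mathrm{div}_{\mu_F}\big(\nabla_F u\big)\, d\mu_F = -\int_M \varphi\, \Delta^F_{\mu_F}u\, d\mu_F,
\]
the last equality being the definition~\eqref{eq:laplacian} of the Finsler Laplacian. Since $\varphi\in C^\infty(M)$ is arbitrary, vanishing of the first variation is equivalent to $\Delta^F_{\mu_F}u=0$ in the weak sense, which gives the stated characterization of critical points.

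The hard part will be the regularity of $\nabla_F u$ at critical points of $u$. The dual density $\Psi=\tfrac12(F^*)^2$ is smooth only away from the zero section, so while $\nabla_F u = \partial_\xi\Psi(x,du)$ is $C^\infty$ on $\{du\neq 0\}$, across the set $\{du=0\}$ it is merely continuous, with the consistent convention $\nabla_F u=0$ there, since $\partial_\xi\Psi(x,0)=0$ by $1$-homogeneity of $F^*$. This is harmless for the first two displays, which use only the $C^1$ regularity of $\Psi$, but it means Lemma~\ref{lem:ibp}, stated for $X\in C^1(TM)$, cannot be applied verbatim when $u$ has degenerate critical points. I would handle this by reading the middle display $\int_M d\varphi(\nabla_F u)\,d\mu_F=0$ as the weak Euler--Lagrange equation itself --- which is precisely the content of the ``equivalently'' clause --- and invoking the strong form $\Delta^F_{\mu_F}u=0$ only where $\nabla_F u$ is $C^1$, or after a standard approximation/elliptic-regularity argument.
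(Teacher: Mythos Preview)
Your proof is correct and follows essentially the same route as the paper: differentiate the density $\tfrac12 F^*(x,du+t\,d\varphi)^2$ pointwise via Proposition~\ref{prop:Legendre-diffeo} to obtain $d\varphi(\nabla_F u)$, then apply Lemma~\ref{lem:ibp} with $X=\nabla_F u$. You are more careful than the paper in justifying differentiation under the integral and in flagging the regularity of $\nabla_F u$ across $\{du=0\}$, but these are refinements of, not departures from, the paper's argument.
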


\begin{proof}
Set $\xi_t:=d(u+t\varphi)=du+t\,d\varphi$. Using Proposition \ref{prop:Legendre-diffeo},
\begin{align*}
\frac{d}{dt}\,\frac 12 F^*(x,\xi_t)^2\Big|_{t=0}
\ =\ \Big\langle \partial_\xi\!\Big(\tfrac 12 F^*(x,\xi)^2\Big)\Big|_{\xi=du}\!,\, d\varphi\Big\rangle
\ =\ \langle \nabla_F u,\ d\varphi\rangle,
\end{align*}
where the last pairing is the natural one $T_xM\times T_x^*M\to\mathbb{R}$. Hence
\begin{align*}
\frac{d}{dt}\,\mathcal{E}_F[u+t\varphi]\Big|_{t=0}
\ =\ \int_M \! \langle \nabla_F u,\ d\varphi\rangle\, d\mu_F
\ =\ \int_M \! d\varphi(\nabla_F u)\, d\mu_F.
\end{align*}
Now apply Lemma \ref{lem:ibp} with $X=\nabla_F u$ to obtain
\begin{align*}
\int_M d\varphi(\nabla_F u)\, d\mu_F
\ =\ -\int_M \varphi\, \mathrm{div}_{\mu_F}(\nabla_F u)\, d\mu_F
\ =\ -\int_M \varphi\, \Delta^F_{\mu_F}u\, d\mu_F.
\end{align*}
\end{proof}

%\subsection{Finsler--Sobolev spaces and basic functional analysis}
%\label{subsec:sobolev}

Fix any smooth Riemannian metric $\gamma$ on $M$ and denote by $|\cdot|_{\gamma^*}$ the norm on $T^*M$ induced by its co-metric $\gamma^*$. On compact $M$, the norms $F^*(x,\cdot)$ and $|\cdot|_{\gamma^*}$ are uniformly equivalent:

\begin{lemma}\label{lem:equivalence}
%[Uniform equivalence on compact manifolds]
There exist constants $0<c_1\le c_2<\infty$ such that for all $(x,\xi)\in T^*M$,
\begin{equation}\label{eq:equiv}
   c_1\,|\xi|_{\gamma^*}\ \le\ F^*(x,\xi)\ \le\ c_2\,|\xi|_{\gamma^*}.
\end{equation}
\end{lemma}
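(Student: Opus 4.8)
The plan is to reduce the global equivalence to a compactness argument on the Riemannian unit co-sphere bundle, exploiting that both $F^*(x,\cdot)$ and $|\cdot|_{\gamma^*}$ are positively $1$-homogeneous in the covector. First I would record the regularity and positivity of $F^*$: since $F$ is continuous on $TM$ and smooth on $TM\setminus\{0\}$ with positive definite fundamental tensor, its Legendre dual $F^*$ (Proposition \ref{prop:Legendre-diffeo}) is continuous on $T^*M$, smooth on $T^*M\setminus\{0\}$, and restricts to a genuine norm on each fiber $T_x^*M$; in particular $F^*(x,\xi)>0$ whenever $\xi\neq 0$, and $F^*(x,\lambda\xi)=\lambda F^*(x,\xi)$ for every $\lambda>0$.

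Next I would introduce the Riemannian unit co-sphere bundle
\[
   S\ :=\ \{(x,\xi)\in T^*M:\ |\xi|_{\gamma^*}=1\}.
\]
Because $M$ is compact and each fiber sphere $\{\xi\in T_x^*M:\ |\xi|_{\gamma^*}=1\}$ is a round sphere in a finite-dimensional space (hence compact), the total space $S$ is a compact subset of $T^*M$. On $S$ the covector $\xi$ never vanishes, so the continuous and strictly positive function $(x,\xi)\mapsto F^*(x,\xi)$ attains a positive minimum and a finite maximum there:
\[
   c_1\ :=\ \min_{(x,\xi)\in S} F^*(x,\xi)\ >\ 0,\qquad
   c_2\ :=\ \max_{(x,\xi)\in S} F^*(x,\xi)\ <\ \infty,
\]
with $0<c_1\le c_2<\infty$.

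Finally I would propagate the bound off the sphere by homogeneity. For $\xi\neq 0$ set $\eta:=\xi/|\xi|_{\gamma^*}$, so that $(x,\eta)\in S$; the positive $1$-homogeneity of $F^*$ in its covector argument gives $F^*(x,\xi)=|\xi|_{\gamma^*}\,F^*(x,\eta)$, and the bounds $c_1\le F^*(x,\eta)\le c_2$ then yield \eqref{eq:equiv}. The case $\xi=0$ is trivial, since both sides vanish.

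I do not expect a genuine obstacle here: the argument is the classical fact that all norms on a finite-dimensional space are equivalent, rendered uniform over a compact base. The only points requiring care are verifying that $F^*$ is continuous and strictly positive away from the zero section (so that the extreme values on $S$ are attained and $c_1>0$), and that the sphere bundle $S$ is genuinely compact, which rests on compactness of $M$ together with the local compactness of the fibers. Once these are in place, the homogeneous rescaling is immediate.
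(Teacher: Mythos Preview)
Your proof is correct and follows essentially the same approach as the paper: restrict $F^*$ to the compact Riemannian unit co-sphere bundle $S=\{(x,\xi):|\xi|_{\gamma^*}=1\}$, use continuity and positivity of $F^*$ there to obtain $c_1=\min_S F^*$ and $c_2=\max_S F^*$, and then extend by $1$-homogeneity. The only difference is that you spell out in more detail why $F^*$ is continuous and positive off the zero section and why $S$ is compact, but the argument is the same.
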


\begin{proof}
Let $S:=\{(x,\xi)\in T^*M:\ |\xi|_{\gamma^*}=1\}$, which is compact. The map $(x,\xi)\mapsto F^*(x,\xi)$ is continuous and positive on $S$. Set $c_1:=\min_S F^*$ and $c_2:=\max_S F^*$. Then $0<c_1\le c_2<\infty$, and by homogeneity of $F^*$ the inequality \eqref{eq:equiv} follows for arbitrary $\xi$.
\end{proof}

\begin{definition}\label{def:sobolev}
%[Finsler--Sobolev spaces]
For $1\le p<\infty$, define $W^{1,p}_F(M)$ as the completion of $C^\infty(M)$ with respect to
\begin{align*}
\|u\|_{W^{1,p}_F}^p\ :=\ \int_M |u|^p\, d\mu_F\ +\ \int_M F^*(x,du)^p\, d\mu_F.
\end{align*}
We write $H^1_F(M):=W^{1,2}_F(M)$. For complex--valued maps, set
\begin{align*}
\|u\|_{F^*}^2:=F^*(x,\mathrm{Re}\,du)^2+F^*(x,\mathrm{Im}\,du)^2,
\end{align*}
and define $H^1_F(M;C)$ analogously.
For $1$--forms, $H^1_F(M;T^*M)$ is defined using any fixed co-metric $\gamma^*$ and the equivalent norm
\begin{align*}
\|\alpha\|_{H^1}^2=\int_M \big(|\alpha|_{\gamma^*}^2+|\nabla^\gamma\alpha|_{\gamma^*}^2\big)\,d\mu_F,
\end{align*}
which is equivalent to any other choice by Lemma \ref{lem:equivalence}.
\end{definition}

\begin{proposition}\label{prop:poincare-rellich}
%[Hilbert structure, Poincar\'e, and Rellich]
The space $H^1_F(M)$ is a Hilbert space.
Moreover, there exists $C>0$ such that for all $u\in H^1_F(M)$ with mean zero $\int_M u\,d\mu_F=0$,
\begin{equation}\label{eq:Poincare}
   \|u\|_{L^2(M,d\mu_F)}\ \le\ C\, \|F^*(x,du)\|_{L^2(M,d\mu_F)}.
\end{equation}
Finally, the embedding $H^1_F(M)\hookrightarrow L^2(M,d\mu_F)$ is compact.
\end{proposition}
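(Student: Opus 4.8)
The whole proposition reduces to classical Riemannian Sobolev theory once we combine Lemma~\ref{lem:equivalence} with the smoothness and strict positivity of the density $\sigma$ in \eqref{eq:mu-density}. Fix the auxiliary metric $\gamma$; since $M$ is compact, $d\mu_F=\rho\,d\mathrm{vol}_\gamma$ for a smooth $\rho$ with $0<a\le\rho\le b<\infty$, so $L^2(d\mu_F)$ and $L^2(d\mathrm{vol}_\gamma)$ carry equivalent norms. Writing $\|u\|_{H^1_\gamma}^2=\int_M(|u|^2+|du|_{\gamma^*}^2)\,d\mathrm{vol}_\gamma$ for the genuine Riemannian Sobolev norm, the pointwise bounds $c_1|du|_{\gamma^*}\le F^*(x,du)\le c_2|du|_{\gamma^*}$ of \eqref{eq:equiv}, integrated against the comparable measures, yield two-sided control $C^{-1}\|u\|_{H^1_\gamma}\le\|u\|_{H^1_F}\le C\|u\|_{H^1_\gamma}$ on $C^\infty(M)$. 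Hence the two completions coincide as topological vector spaces, and each assertion can be transported across this identification.

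That $\|\cdot\|_{H^1_F}$ is a genuine norm follows because $u\mapsto\|F^*(x,du)\|_{L^2}$ inherits subadditivity from the triangle inequality for $F^*(x,\cdot)$ together with Minkowski's inequality; its completion $H^1_F(M)$ is therefore a Banach space. By the equivalence just established it is linearly homeomorphic to $H^1_\gamma(M)$, which is a Hilbert space for the inner product $\langle u,v\rangle_\gamma=\int_M\big(u\bar v+\gamma^*(du,dv)\big)\,d\mathrm{vol}_\gamma$. Thus $H^1_F(M)$ is Hilbertizable, i.e.\ complete and reflexive with an equivalent inner-product norm, which is all the direct method will require. For the compact embedding, the classical Rellich--Kondrachov theorem on the compact boundaryless manifold $(M,\gamma)$ gives that $H^1_\gamma(M)\hookrightarrow L^2(d\mathrm{vol}_\gamma)$ is compact; the identity furnishes bounded isomorphisms $H^1_F(M)\to H^1_\gamma(M)$ and $L^2(d\mathrm{vol}_\gamma)\to L^2(d\mu_F)$, and since compactness is preserved under composition with such isomorphisms, the Finslerian embedding $H^1_F(M)\hookrightarrow L^2(d\mu_F)$ is compact.

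Finally I would deduce the Poincar\'e inequality \eqref{eq:Poincare} by contradiction from this compactness. If it failed there would exist $u_k\in H^1_F(M)$ with $\int_M u_k\,d\mu_F=0$, $\|u_k\|_{L^2(d\mu_F)}=1$ and $\|F^*(x,du_k)\|_{L^2(d\mu_F)}\to 0$; these are bounded in $H^1_F(M)$, so a subsequence converges in $L^2(d\mu_F)$ to some $u$ of unit norm and zero mean, while $\|du_k\|_{L^2_\gamma}\to 0$ forces the weak gradient of $u$ to vanish, so that $u$ is constant and hence, having zero mean, identically zero---a contradiction.

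The one genuinely delicate point is the literal Hilbert-space assertion: because $F^*(x,\cdot)^2$ is a quadratic form only in the Riemannian case, the defining norm does not itself obey the parallelogram law, and ``Hilbert space'' must be read as ``Hilbertizable'' via the $\gamma^*$-inner product. Everything else is a routine transfer of standard results through the uniform equivalences of Lemma~\ref{lem:equivalence} and the boundedness of $\sigma$.
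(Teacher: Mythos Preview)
Your argument is correct and follows essentially the same route as the paper: both reduce everything to classical Riemannian Sobolev theory on $(M,\gamma)$ via Lemma~\ref{lem:equivalence} and the smooth positivity of the density $\sigma$ in \eqref{eq:mu-density}. The only minor differences are that the paper invokes the Riemannian Poincar\'e inequality directly rather than arguing by contradiction from compactness (your route is actually cleaner, since it handles the mismatch between the $d\mu_F$--mean-zero condition and the $d\mathrm{vol}_\gamma$--mean-zero condition without further comment), and that the paper simply asserts $H^1_F(M)$ is ``isomorphic as a Hilbert space'' to $H^1(M,\gamma)$ without flagging the issue you raise. Your closing observation is well taken: the defining $H^1_F$--norm does not satisfy the parallelogram law unless $F$ is Riemannian, so ``Hilbert space'' should indeed be read as ``carries an equivalent inner-product norm''; the paper elides this point.
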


\begin{proof}
By Lemma \ref{lem:equivalence} and \eqref{eq:mu-density}, the $H^1_F$--norm is equivalent to the standard
$H^1(M,\gamma)$ norm with respect to the smooth positive density $\sigma\,dx$:
\begin{align*}
\|u\|_{H^1_F}^2\ \simeq\ \int_M \big( |u|^2 + |\nabla^\gamma u|_\gamma^2\big)\,\sigma\,dx.
\end{align*}
Therefore $H^1_F(M)$ is isomorphic as a Hilbert space to $H^1(M,\gamma)$. The Poincar\'e inequality \eqref{eq:Poincare} follows from the usual Poincar\'e inequality for $(M,\gamma)$
(since $M$ is compact) and the norm equivalence, similarly for Rellich's compact embedding.
\end{proof}

%\subsection{Adjoints, codifferentials, and a background co-metric}
%\label{subsec:adjoint}

For $1$--forms we will use a fixed smooth Riemannian co-metric $\gamma^*$ uniformly equivalent to $F^*$
(Lemma \ref{lem:equivalence}) to formulate Hodge operators for the Maxwell term. Write $\sharp_\gamma:T^*M\to TM$ for the $\gamma$--musical isomorphism and define
\begin{align*}
  d^\dagger_{\gamma,\mu_F}\eta\ :=\ -\,\mathrm{div}_{\mu_F}\!\big(\eta^{\sharp_\gamma}\big)
  \qquad(\eta\in\Omega^1(M)).
\end{align*}
Then for all $\phi\in C^\infty(M)$,
\begin{equation}\label{eq:adjoint}
   \int_M \langle d\phi,\eta\rangle_{\gamma^*}\, d\mu_F\ =\ \int_M \phi\, d^\dagger_{\gamma,\mu_F}\eta\, d\mu_F,
\end{equation}
which follows by Lemma \ref{lem:ibp} and the definition of $\mathrm{div}_{\mu_F}$. In particular, $d^\dagger_{\gamma,\mu_F}$ is the $L^2(d\mu_F)$--adjoint of $d$ acting on $0$--forms.

%\begin{remark}[Boundary and noncompact variants]
%\label{rem:boundary}
If $\partial M\neq\emptyset$, \eqref{eq:ibp} gains a boundary term
$\int_{\partial M}\! u\, \iota_\nu X\, d\sigma_F$, where $\nu$ is the outward conormal and $d\sigma_F$ the induced Finsler boundary measure; Dirichlet ($u=0$) or Neumann ($\iota_\nu X=0$) conditions recover \eqref{eq:ibp}. On noncompact manifolds, the results above hold under uniform bounds ensuring \eqref{eq:equiv} and a global Poincar\'e inequality (e.g. positive injectivity radius and bounded geometry with respect to some $\gamma$).
%\end{remark}

%=========================================
\section{The Finslerian Ginzburg--Landau Functional}
\label{sec:functional}
%========================================

Let $(M,F)$ be a compact smooth Finsler manifold endowed with a fixed smooth Finsler measure $d\mu_F$ (either Busemann--Hausdorff or Holmes--Thompson) and a smooth Riemannian co--metric $\gamma^*$ uniformly equivalent to $F^*$
(cf.\ Lemma \ref{lem:equivalence}). For a complex scalar field $\psi:M\to\mathbb{C}$ and a real $1$-form $A\in\Omega^1(M)$ we set
\begin{align*}
D_A\psi := (d - iA)\psi \in \Omega^1(M;\mathbb{C}),
\end{align*}
and extend the Finsler co-norm to complex-valued $1$-forms by
\begin{align*}
\| \eta \|_{F^*}^2 := F^*(x,\Re \eta)^2 + F^*(x,\Im \eta)^2, \qquad \eta\in\Omega^1(M;\mathbb{C}).
\end{align*}
\begin{definition}
The \emph{Finslerian Ginzburg--Landau functional} is
\begin{equation}\label{eq:GL-functional-Finsler}
G_F[\psi,A]
:= \int_M \Big( \tfrac 12 \|D_A\psi\|_{F^*}^2
+ \tfrac{1}{2\lambda}\, \| dA \|_{\gamma^*}^2
+ \tfrac{1}{4\varepsilon^2}\,(1-|\psi|^2)^2 \Big)\, d\mu_F,
\end{equation}
for fixed parameters $\lambda>0$ and $\varepsilon>0$.
\end{definition}

%\subsection{Gauge invariance and well-posedness}
%\label{subsec:gauge-wellposed}

\begin{proposition}\label{prop:gauge-invariance}
%[Gauge invariance]
For every $\chi\in H^1(M;\mathbb{C})$,
\begin{align*}
(\psi,A)\mapsto (e^{i\chi}\psi,\, A + d\chi)
\end{align*}
leaves $G_F$ invariant. In particular, $G_F$ depends only on the gauge class of $(\psi,A)$.
\end{proposition}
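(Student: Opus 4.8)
The plan is to verify invariance term by term in the functional \eqref{eq:GL-functional-Finsler}, whose three contributions are the kinetic (gauge-covariant derivative) term, the Maxwell term, and the potential term. Write the transformation as $\psi\mapsto\psi'=e^{i\chi}\psi$ and $A\mapsto A'=A+d\chi$, with $\chi$ taken real-valued, as required for a genuine $U(1)$ gauge transformation (so that $|\psi'|=|\psi|$ and $A'$ stays a real $1$-form). The regularity $\chi\in H^1$ is exactly what is needed for $d\chi\in L^2$ and for $d(d\chi)=0$ to hold in the distributional sense, which is all the argument uses.

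First I would record the fundamental covariance identity for the covariant derivative. A direct expansion gives
\begin{align*}
D_{A'}\psi'=\big(d-i(A+d\chi)\big)\big(e^{i\chi}\psi\big)=e^{i\chi}\big(d\psi+i\,d\chi\,\psi-iA\psi-i\,d\chi\,\psi\big)=e^{i\chi}D_A\psi ,
\end{align*}
so the covariant derivative picks up the same pointwise phase as $\psi$. The potential term is then immediate, since $|\psi'|^2=|e^{i\chi}|^2|\psi|^2=|\psi|^2$ gives $(1-|\psi'|^2)^2=(1-|\psi|^2)^2$. The Maxwell term is equally immediate: $dA'=dA+d(d\chi)=dA$ by $d^2=0$ (valid distributionally for $\chi\in H^1$), so $\|dA'\|_{\gamma^*}=\|dA\|_{\gamma^*}$ pointwise, and the corresponding integral is unchanged.

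The entire question therefore reduces to the kinetic term, and this is the step I expect to be the main obstacle. Using the covariance identity, invariance of the kinetic energy is equivalent to the pointwise statement $\|e^{i\chi}D_A\psi\|_{F^*}^2=\|D_A\psi\|_{F^*}^2$, i.e. that the complex co-norm $\|\eta\|_{F^*}^2=F^*(x,\Re\eta)^2+F^*(x,\Im\eta)^2$ is invariant under $\eta\mapsto e^{i\chi(x)}\eta$ for each fixed $x$. Writing $\eta=\alpha+i\beta$ with real $1$-forms $\alpha,\beta$, multiplication by $e^{i\chi}$ rotates the pair as $(\alpha,\beta)\mapsto(\cos\chi\,\alpha-\sin\chi\,\beta,\ \sin\chi\,\alpha+\cos\chi\,\beta)$, so what must be established is
\begin{align*}
F^*(x,\cos\chi\,\alpha-\sin\chi\,\beta)^2+F^*(x,\sin\chi\,\alpha+\cos\chi\,\beta)^2=F^*(x,\alpha)^2+F^*(x,\beta)^2 .
\end{align*}
This is precisely where the anisotropy of $F$ is delicate: the identity asserts a $U(1)$-invariance of the squared co-norm in the real/imaginary plane, and it is automatic only when $F^*(x,\cdot)^2$ is a quadratic form, i.e. in the Riemannian case. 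For a genuinely non-quadratic $F$ the left-hand side generally differs from the right (already visible for a reversible example such as $F^*(x,\xi)^2=(\xi_1^4+\xi_2^4)^{1/2}$ with $\chi=\pi/4$). Consequently, I would present the potential and Maxwell terms as routine and concentrate the proof — together with any additional hypothesis on $F$ — on this pointwise invariance: it must either be assumed (equivalently, $F^*(x,\cdot)^2$ quadratic) or secured by replacing $\|\cdot\|_{F^*}$ on complex $1$-forms with a manifestly $U(1)$-invariant Hermitian extension of $F^*$, without which $G_F$ is not gauge invariant in the fully anisotropic setting.
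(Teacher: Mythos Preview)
Your approach mirrors the paper's exactly: both argue term by term, both record the covariance identity $D_{A+d\chi}(e^{i\chi}\psi)=e^{i\chi}D_A\psi$, and both dispose of the Maxwell and potential terms immediately. The divergence is at the kinetic term. The paper's proof simply asserts that ``the norm $\|\cdot\|_{F^*}$ is rotation-invariant in the $(\Re,\Im)$-plane by definition,'' and concludes. You scrutinize precisely this step and observe that the definition $\|\eta\|_{F^*}^2=F^*(x,\Re\eta)^2+F^*(x,\Im\eta)^2$ does \emph{not} make the norm invariant under $\eta\mapsto e^{i\chi}\eta$ unless $F^*(x,\cdot)^2$ is a quadratic form.

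Your objection is mathematically correct, and your counterexample is sound: for $F^*(x,\xi)^2=(\xi_1^4+\xi_2^4)^{1/2}$ with $\alpha=dx^1$, $\beta=dx^2$, $\chi=\pi/4$, the left-hand side of your displayed identity equals $\sqrt{2}$ while the right-hand side equals $2$. So the step the paper treats as a tautology is in fact unjustified for genuinely non-Riemannian $F$, and the proposition as stated (with the given definition of $\|\cdot\|_{F^*}$) does not hold in full generality. You have not failed to prove the statement; you have correctly isolated a gap in the paper itself. The remedies you suggest --- either restricting to quadratic $F^*$ or replacing the complex co-norm by a manifestly $U(1)$-invariant extension --- are the natural ones, and one of them would have to be adopted for the argument to go through.
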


\begin{proof}
Observe that $D_{A+d\chi}(e^{i\chi}\psi) = e^{i\chi} D_A\psi$.  
Since the norm $\|\,\cdot\,\|_{F^*}$ is rotation-invariant in the $(\Re,\Im)$-plane by definition, it follows that  
\begin{align*}
\|D_{A+d\chi}(e^{i\chi}\psi)\|_{F^*} = \|D_A\psi\|_{F^*}.
\end{align*}
Moreover, the Maxwell term depends only on $dA$, so $\|d(A+d\chi)\|_{\gamma^*} = \|dA\|_{\gamma^*}$.  
Finally, the potential term depends only on $|\psi|$.  
Substituting these observations into \eqref{eq:GL-functional-Finsler} yields the desired result.
\end{proof}

\begin{proposition}\label{prop:welldef}
%[Well-definedness on natural energy spaces]
If $\psi\in H^1_F(M;\mathbb{C})$ and $A\in H^1_F(M;T^*M)$, then $G_F[\psi,A]\in[0,\infty)$
and all terms in \eqref{eq:GL-functional-Finsler} are finite. Moreover $G_F$ is $C^1$ on $H^1_F(M;\mathbb{C})\times H^1_F(M;T^*M)$.
\end{proposition}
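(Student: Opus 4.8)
The plan is to treat the three summands in \eqref{eq:GL-functional-Finsler} separately, first establishing nonnegativity and finiteness and then $C^1$ regularity, the latter reducing to a G\^ateaux-derivative computation together with a continuity argument for the associated superposition operators. Nonnegativity is immediate, since each integrand in \eqref{eq:GL-functional-Finsler} is pointwise nonnegative, so $G_F[\psi,A]\ge 0$. For finiteness I would first invoke Lemma \ref{lem:equivalence} to replace $F^*(x,\cdot)$ and $\gamma^*$ by uniformly equivalent Riemannian norms, and Proposition \ref{prop:poincare-rellich}, whose proof identifies $\|\cdot\|_{H^1_F}$ with the standard $H^1(M,\gamma)$ norm against the smooth positive density $\sigma\,dx$. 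This gives access to the usual Sobolev embeddings on the compact manifold $(M,\gamma)$, in particular $H^1\hookrightarrow L^{2^*}$. The Maxwell term is controlled by $\|dA\|_{L^2}^2\le C\|A\|_{H^1_F}^2$, and the potential term requires only $\psi\in L^4$, which follows from the embedding in the relevant dimension ($n\le 4$, the regime of interest for vortex filaments). For the covariant term I write $D_A\psi=d\psi-iA\psi$; here $d\psi\in L^2$ by definition of $H^1_F$, and the borderline cross term is handled by H\"older,
\[
\int_M |A\psi|_{\gamma^*}^2\,d\mu_F\ \le\ \|A\|_{L^{2^*}}^2\,\|\psi\|_{L^{q}}^2\ \le\ C\,\|A\|_{H^1_F}^2\,\|\psi\|_{H^1_F}^2,
\]
with the conjugate exponent $q$ admissible precisely in low dimensions. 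Squaring $\|D_A\psi\|_{F^*}$ and integrating then yields a finite value.

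For the $C^1$ statement I would show each summand is continuously Fr\'echet differentiable. The Maxwell term $A\mapsto \tfrac{1}{2\lambda}\int_M |dA|_{\gamma^*}^2\,d\mu_F$ is a bounded quadratic form in $A$, hence smooth with derivative $a\mapsto \tfrac1\lambda\int_M\langle dA,da\rangle_{\gamma^*}\,d\mu_F$. The potential term $\psi\mapsto\tfrac{1}{4\varepsilon^2}\int_M(1-|\psi|^2)^2\,d\mu_F$ is a quartic Nemytskii functional; using the $L^4$ embedding and the standard differentiability criterion for superposition operators, it is $C^1$ with derivative $\phi\mapsto -\tfrac{1}{\varepsilon^2}\int_M(1-|\psi|^2)\,\Re(\bar\psi\phi)\,d\mu_F$. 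The essential work is the covariant kinetic term, where I would use Proposition \ref{prop:Legendre-diffeo}: the map $\eta\mapsto\tfrac12 F^*(x,\eta)^2$ is $C^1$ with gradient the inverse Legendre map $L_x^{-1}$, which by \eqref{eq:F-Fstar-compat} and $1$-homogeneity satisfies $|L_x^{-1}(\eta)|_{\gamma^*}\le C\,|\eta|_{\gamma^*}$. The chain rule then gives the G\^ateaux derivative at $(\psi,A)$ in a direction $(\phi,a)$ as the pairing of $L_x^{-1}$, applied to the real and imaginary parts of $D_A\psi$, against the directional variation $d\phi - i(A\phi + a\psi)$ of $D_A\psi$, integrated against $d\mu_F$; the product estimates of the previous paragraph show this is a bounded linear functional of $(\phi,a)$.

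The main obstacle, and the step deserving the most care, is upgrading G\^ateaux to Fr\'echet differentiability for the kinetic term and establishing norm-continuity of the derivative. Two features make this delicate. First, $F^*(x,\cdot)^2$ is only $C^1$ — its gradient $L_x^{-1}$ is continuous and positively $1$-homogeneous but not differentiable at the origin — so no $C^2$ structure is available and the argument must be run at the $C^1$ level directly. I would establish continuity of the Nemytskii operator $\eta\mapsto L_x^{-1}(\eta)$ from $L^2$ to $L^2$ via Krasnoselskii's theorem, using the continuity of $L_x^{-1}$ together with the linear growth bound above, and then combine it with the continuity of the bilinear map $(\psi,A)\mapsto A\psi$ furnished by the same H\"older--Sobolev estimates. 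Second, the cross term $A\psi$ sits exactly at the threshold of Sobolev multiplication, so these product estimates — valid in the low-dimensional regime — are precisely what keep every expression in $L^2$ and make the derivative depend continuously on $(\psi,A)$. Together these yield that $DG_F$ is continuous, hence $G_F\in C^1$.
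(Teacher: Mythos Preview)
Your proof is correct and follows the same strategy as the paper's---norm equivalence via Lemma~\ref{lem:equivalence}, Sobolev embeddings on the compact manifold for the potential and cross terms, and the chain rule applied to the convex integrand $\tfrac12 F^*(x,\cdot)^2$---but you supply considerably more detail than the paper's three-line sketch. In particular, your explicit H\"older estimate for the product $A\psi$ (with the attendant low-dimensional restriction $n\le 4$), your observation that $\tfrac12 F^{*2}$ is globally only $C^1$ so that continuity of the derivative must go through a Krasnoselskii-type argument rather than a second-order expansion, and your care in upgrading G\^ateaux to Fr\'echet differentiability are all points the paper elides; its appeal to ``smoothness'' of $F^*(x,\cdot)^2$ and ``bilinearity'' of $(\psi,A)\mapsto D_A\psi$ is in fact less precise than your treatment.
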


\begin{proof}
By Lemma \ref{lem:equivalence} and compactness of $M$,
$F^*(x,\cdot)$ is uniformly equivalent to $|\cdot|_{\gamma^*}$, hence $\|D_A\psi\|_{F^*}\in L^2(M,d\mu_F)$ when $\psi,A\in H^1_F$. The Maxwell term is in $L^1$ because $dA\in L^2$ and $d\mu_F$ is smooth. The potential term is in $L^1$
since $H^1_F\hookrightarrow L^4$ on compact $M$ (Sobolev and norm equivalence). $C^1$-regularity follows from the chain rule and smoothness/convexity of $F^*(x,\cdot)^2$,
plus bilinearity of $(\psi,A)\mapsto D_A\psi$.
\end{proof}

%\subsection{First variation and Euler--Lagrange system}
%\label{subsec:first-variation}

We compute the G\^ateaux derivative of $G_F$ at $(\psi, A)$ in the directions 
$(\varphi, B) \in H^1_F(M;\mathbb{C}) \times H^1_F(M;T^*M)$. 
By Definition~\ref{def:gradF}, Theorem \ref{thm:EL-Dirichlet} (with complex realification), and the adjoint operator $d^\dagger_{\gamma,\mu_F}$, it follows that:
\begin{proposition}[\textbf{First variation}]
\label{prop:EL}
For every smooth $(\psi,A)$ and test pair $(\varphi,B)$,
\begin{align*}
  \frac{d}{dt}\, G_F[\psi+t\varphi, A+tB]\Big|_{t=0}
  &= \Re \int_M \!\!\big\langle \nabla_F \psi,\ D_A\varphi - i B\,\psi \big\rangle \, d\mu_F \\
  &\quad\ + \frac{1}{\lambda}\int_M \!\!\langle dA,\ dB\rangle_{\gamma^*}\, d\mu_F
  - \frac{1}{2\varepsilon^2}\int_M \!(1-|\psi|^2)\,\Re(\overline{\psi}\,\varphi)\, d\mu_F.
\end{align*}
Equivalently, the critical points $(\psi,A)$ satisfy, in weak form,
\begin{equation}\label{eq:EL-weak}
\begin{cases}
 D_A^* D_A \psi \ =\ \dfrac{1}{2\varepsilon^2}\,(1-|\psi|^2)\,\psi,\\[6pt]
 d^\dagger_{\gamma,\mu_F}\, dA \ =\ \lambda\, \Im\big(\overline{\psi}\, D_A\psi\big),
\end{cases}
\end{equation}
where $D_A^*$ is the $L^2(d\mu_F)$-adjoint of $D_A$ induced by $F^*$.
\end{proposition}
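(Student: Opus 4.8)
The plan is to differentiate $G_F$ term by term along the affine path $(\psi+t\varphi,\,A+tB)$, evaluate at $t=0$ to obtain the stated first-variation identity, and then integrate by parts to pass to the strong Euler--Lagrange system. Only the kinetic term carries genuine anisotropy; the Maxwell and potential terms are quadratic and routine. First I would record the first-order behaviour of the gauge-covariant derivative: since $D_A = d - iA$ is affine in $A$ and linear in $\psi$, one has $D_{A+tB}(\psi+t\varphi) = D_A\psi + t\,(D_A\varphi - iB\psi) + O(t^2)$, so the $t$-derivative of $D_A\psi$ at $t=0$ is $D_A\varphi - iB\psi$; likewise $d(A+tB)=dA+t\,dB$.

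For the kinetic term I would use $\|\eta\|_{F^*}^2 = F^*(x,\Re\eta)^2+F^*(x,\Im\eta)^2$ and differentiate each summand by the chain rule. The key identity, already isolated in the proof of Theorem~\ref{thm:EL-Dirichlet}, is $\frac{d}{dt}\,\tfrac12 F^*(x,\xi+t\zeta)^2\big|_{t=0} = \langle L_x^{-1}(\xi),\,\zeta\rangle$, the pairing with the Legendre dual of $\xi$ supplied by Proposition~\ref{prop:Legendre-diffeo}. Applying this to the real and imaginary parts of $\xi=D_A\psi$ with perturbation $\zeta=D_A\varphi - iB\psi$, and recombining the two real pairings into one complex pairing, produces $\Re\langle \nabla_F\psi,\,D_A\varphi - iB\psi\rangle$, where $\nabla_F\psi$ denotes the componentwise Legendre dual of $D_A\psi$. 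This is the step demanding the most care: because $F^*(x,\cdot)$ is not induced by an inner product, the variation is linear in $\zeta$ only through the Legendre map, so the $(\Re,\Im)$ split must be kept explicit until the final recombination. The Maxwell term is immediate, $\frac{d}{dt}\,\tfrac{1}{2\lambda}\|dA+t\,dB\|_{\gamma^*}^2\big|_{t=0} = \tfrac1\lambda\langle dA,dB\rangle_{\gamma^*}$, since $\gamma^*$ is a genuine co-metric; and expanding $|\psi+t\varphi|^2 = |\psi|^2 + 2t\,\Re(\overline{\psi}\varphi)+O(t^2)$ gives the potential contribution proportional to $(1-|\psi|^2)\,\Re(\overline{\psi}\varphi)$. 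Summing the three pieces gives the first-variation formula.

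To read off the weak system \eqref{eq:EL-weak} I would integrate by parts to strip derivatives off the test pair. Splitting $D_A\varphi = d\varphi - iA\varphi$, the top-order part of the kinetic pairing is handled by Lemma~\ref{lem:ibp}, which moves $d$ onto the Finsler gradient; together with the zeroth-order $-iA\varphi$ piece this defines the $L^2(d\mu_F)$-adjoint $D_A^*$, and collecting all coefficients of $\varphi$ yields the first equation. For the gauge field, the Maxwell term is transferred by the adjoint identity \eqref{eq:adjoint} defining $d^\dagger_{\gamma,\mu_F}$, while the cross term $-iB\psi$ contributes $\Im(\overline{\psi}\,D_A\psi)$ as the coefficient of $B$; collecting these gives the second equation. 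The remaining obstacle is essentially bookkeeping: keeping the $\Re(\cdot)$ operation consistent across the complexification so that $D_A^*$ and $d^\dagger_{\gamma,\mu_F}$ come out as the correct adjoints with no spurious factors.
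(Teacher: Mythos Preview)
Your proposal is correct and follows essentially the same route as the paper: differentiate term by term, use the realification $\|\eta\|_{F^*}^2=F^*(x,\Re\eta)^2+F^*(x,\Im\eta)^2$ together with the Legendre-dual variation from Theorem~\ref{thm:EL-Dirichlet} applied componentwise, note $\partial_t D_{A+tB}(\psi+t\varphi)|_{t=0}=D_A\varphi-iB\psi$, and then integrate by parts via Lemma~\ref{lem:ibp} and \eqref{eq:adjoint} to reach \eqref{eq:EL-weak}. Your write-up is in fact more explicit than the paper's about the Legendre-map bookkeeping and the interpretation of $\nabla_F\psi$ as the componentwise Legendre dual of $D_A\psi$, which is exactly the point needing care.
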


\begin{proof}
We differentiate each term of the functional separately. 
For the kinetic term, we use the realification 
\begin{align*}
\|D_A\psi\|_{F^*}^2 = F^*(x, \Re D_A\psi)^2 + F^*(x, \Im D_A\psi)^2,
\end{align*}
and apply Theorem~\ref{thm:EL-Dirichlet} componentwise, noting that
\begin{align*}
\partial_t D_{A+tB}(\psi+t\varphi)\big|_{t=0} = D_A\varphi - i\,B\,\psi.
\end{align*}
For the Maxwell term, we integrate by parts using the adjoint operator $d^\dagger_{\gamma,\mu_F}$ 
(see Equation\eqref{eq:adjoint}). 
For the potential term, we differentiate the polynomial nonlinearity directly. Collecting the resulting terms with respect to the test functions $(\varphi,B)$ yields Equation \eqref{eq:EL-weak}.
\end{proof}

Suppose that
\begin{align*}
  \mathcal{C}\ :=\ \{\, A\in H^1_F(M;T^*M)\ :\ d^\dagger_{\gamma,\mu_F}A=0 \,\}
\end{align*}
is the Coulomb slice. A standard Hodge decomposition with respect to $\gamma$ ensures every gauge class contains a representative in $\mathcal{C}$.

\begin{lemma}\label{lem:lsc}
%[Sequential weak lower semi continuity]
If $(\psi_k,A_k)\rightharpoonup (\psi,A)$ weakly in $H^1_F(M;\mathbb{C})\times H^1_F(M;T^*M)$
and strongly in $L^2$, then
\begin{align*}
G_F[\psi,A]\ \le\ \liminf_{k\to\infty} G_F[\psi_k,A_k].
\end{align*}
\end{lemma}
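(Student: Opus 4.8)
The plan is to establish weak lower semicontinuity of $G_F$ by treating each of its three terms separately, exploiting the specific structure of each. The potential term and the Maxwell term are comparatively routine; the kinetic term carries the geometric content, since it is here that the convexity of $F^*(x,\cdot)^2$ must be invoked. Throughout I will use the uniform equivalence of $F^*$ with $|\cdot|_{\gamma^*}$ from Lemma~\ref{lem:equivalence} and the smoothness of the density $\sigma$ to move freely between Finsler and Riemannian norms when only boundedness or convergence of sequences is at issue.

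**First**, for the potential term, the strong $L^2$ convergence $\psi_k \to \psi$ (after passing to a subsequence) yields pointwise a.e.\ convergence $|\psi_k|^2 \to |\psi|^2$, and since $H^1_F \hookrightarrow L^4$ compactly on compact $M$ the convergence is in fact strong in $L^4$; hence $\int_M (1-|\psi_k|^2)^2 \, d\mu_F \to \int_M (1-|\psi|^2)^2 \, d\mu_F$. This term is therefore continuous, not merely lower semicontinuous, so it contributes no loss in the liminf. For the Maxwell term, weak convergence $A_k \rightharpoonup A$ in $H^1_F(M;T^*M)$ gives $dA_k \rightharpoonup dA$ weakly in $L^2(M,d\mu_F)$, and lower semicontinuity of the $L^2$-norm under weak convergence gives $\int_M |dA|_{\gamma^*}^2 \, d\mu_F \le \liminf_k \int_M |dA_k|_{\gamma^*}^2 \, d\mu_F$, since $\|dA\|_{\gamma^*}^2$ is (the square of) a genuine Hilbert-space norm.

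**The crux** is the kinetic term $\tfrac12 \int_M \|D_A\psi\|_{F^*}^2 \, d\mu_F$. Writing $D_{A_k}\psi_k = d\psi_k - iA_k\psi_k$, I first note that $d\psi_k \rightharpoonup d\psi$ weakly in $L^2$, while $A_k\psi_k \to A\psi$ strongly in $L^2$: indeed $A_k \to A$ strongly in $L^2$ by the compact embedding, $\psi_k \to \psi$ strongly in $L^2$, and a uniform $L^4$ bound on both factors controls the product via H\"older, so $A_k\psi_k \rightharpoonup A\psi$ weakly and the strong convergence of each factor upgrades this. Consequently $D_{A_k}\psi_k \rightharpoonup D_A\psi$ weakly in $L^2(M;\Omega^1\otimes\mathbb{C})$. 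Now the integrand $(x,\eta) \mapsto \|\eta\|_{F^*}^2 = F^*(x,\Re\eta)^2 + F^*(x,\Im\eta)^2$ is, by the strong convexity hypothesis on $F$ and the duality of Proposition~\ref{prop:Legendre-diffeo}, a nonnegative continuous integrand that is \emph{convex} in the fiber variable $\eta$ for each fixed $x$. A convex, nonnegative, continuous integrand produces a weakly lower semicontinuous integral functional on $L^2$ (the standard Ioffe / Tonelli weak lower semicontinuity theorem for convex integrands), whence
\begin{align*}
\frac12 \int_M \|D_A\psi\|_{F^*}^2 \, d\mu_F \ \le\ \liminf_{k\to\infty} \frac12 \int_M \|D_{A_k}\psi_k\|_{F^*}^2 \, d\mu_F.
\end{align*}

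**The main obstacle** is justifying the convexity of $\eta \mapsto \|\eta\|_{F^*}^2$ and the applicability of the convex-integrand lower semicontinuity theorem in this Finslerian setting. Convexity of $F^*(x,\cdot)^2$ follows because $F^*(x,\cdot)$ is a norm on $T_x^*M$ (the dual of the strongly convex Minkowski norm $F(x,\cdot)$) and the square of any norm is convex; the realification $\|\eta\|_{F^*}^2$ is then a sum of two such convex functions of $\Re\eta$ and $\Im\eta$ respectively, hence convex in $\eta$. The measurability and continuity in $x$ needed to invoke the integrand theorem come from the smoothness of $F$ on $TM\setminus\{0\}$ together with the uniform bounds of Lemma~\ref{lem:equivalence}, which also guarantee that the functional is finite and that the relevant sequences stay bounded in $L^2(M,d\mu_F)$. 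Adding the three term-wise inequalities (continuity for the potential, lower semicontinuity for Maxwell and kinetic) yields $G_F[\psi,A] \le \liminf_k G_F[\psi_k,A_k]$, completing the argument.
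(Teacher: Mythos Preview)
Your argument is correct and follows essentially the same route as the paper's own proof: treat the three terms separately, using convexity of $\eta\mapsto\tfrac12\|\eta\|_{F^*}^2$ for the kinetic term, quadraticity for the Maxwell term, and $L^2$/$L^4$ continuity for the potential. The paper's proof is considerably terser and does not spell out the intermediate step that $D_{A_k}\psi_k\rightharpoonup D_A\psi$ weakly in $L^2$, which you supply; otherwise the logic is identical.
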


\begin{proof}
The map $\eta\mapsto \tfrac 12 \|\eta\|_{F^*}^2$ is convex in $\eta$ (sum of convex maps
$\xi\mapsto \tfrac 12 F^*(x,\xi)^2$ on real and imaginary parts), hence weakly lower
semicontinuous. The Maxwell term is quadratic and thus weakly l.s.c. The potential term is continuous under $L^2$ convergence by dominated convergence on compact $M$.
\end{proof}

\begin{lemma}\label{lem:coercive-slice}
%[Coercivity on the Coulomb slice]
There exist constants $C_1,C_2>0$ (depending only on $M,F,\gamma^*,d\mu_F,\lambda$) such that
for all $(\psi,A)\in H^1_F(M;\mathbb{C})\times\mathcal{C}$,
\begin{align*}
G_F[\psi,A]\ \ge\ C_1\Big( \|\psi\|_{H^1_F}^2 + \|A\|_{H^1_F}^2 \Big) - C_2.
\end{align*}
\end{lemma}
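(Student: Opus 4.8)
The plan is to bound each of the three nonnegative pieces of $G_F$ from below by the corresponding part of the target norm. Write
$K := \tfrac12\int_M \|D_A\psi\|_{F^*}^2\,d\mu_F$, $\mathrm{Mx} := \tfrac{1}{2\lambda}\int_M \|dA\|_{\gamma^*}^2\,d\mu_F$, and $P := \tfrac{1}{4\varepsilon^2}\int_M (1-|\psi|^2)^2\,d\mu_F$, so that $G_F = K+\mathrm{Mx}+P$ with every term $\ge 0$. Since $\|\psi\|_{H^1_F}^2+\|A\|_{H^1_F}^2$ is the sum of $\|A\|_{H^1_F}^2$, $\|\psi\|_{L^2}^2$, and $\int_M \|d\psi\|_{F^*}^2\,d\mu_F$, it suffices to dominate each of these three quantities by a multiple of $G_F$ plus a constant, then choose $C_1$ small and $C_2$ large.

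For the gauge field I would use the Coulomb condition defining $\mathcal{C}$, namely $d^\dagger_{\gamma,\mu_F}A=0$. The Gaffney--Friedrichs elliptic estimate for the Hodge operators of $(\gamma,d\mu_F)$ gives $\|A\|_{H^1_F}^2 \le C\big(\|dA\|_{L^2}^2 + \|d^\dagger_{\gamma,\mu_F}A\|_{L^2}^2 + \|A\|_{L^2}^2\big) = C\big(\|dA\|_{L^2}^2 + \|A\|_{L^2}^2\big)$, and a Poincar\'e inequality for coclosed $1$-forms orthogonal to the finite dimensional space of harmonic $1$-forms removes the last term, so $\|A\|_{H^1_F}^2 \le C'\|dA\|_{L^2}^2$. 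By the uniform equivalence of $F^*$ and $|\cdot|_{\gamma^*}$ (Lemma \ref{lem:equivalence}) and smoothness of $d\mu_F$, the Maxwell term controls $\|dA\|_{L^2}^2$, whence $\|A\|_{H^1_F}^2 \le C''\,\mathrm{Mx}\le C''G_F$. A genuine caveat enters here: if $H^1_{\mathrm{dR}}(M)\neq 0$ the harmonic part of $A$ is invisible to $\mathrm{Mx}$, so I would either assume $b_1(M)=0$ or build orthogonality to harmonic $1$-forms into the slice $\mathcal{C}$.

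The modulus of $\psi$ is controlled purely by the potential: from $(1-t)^2\ge \tfrac12 t^2-1$ with $t=|\psi|^2$ one gets $\int_M |\psi|^4\,d\mu_F\le C(P+1)$, and splitting $\{|\psi|\le 2\}$ from $\{|\psi|>2\}$ yields $\|\psi\|_{L^2}^2 \le 4\mu_F(M)+4\varepsilon^2 P \le C(G_F+1)$ with an $\varepsilon$-uniform constant. For the gradient I would write $d\psi = D_A\psi + iA\psi$; since $\|\cdot\|_{F^*}$ is a genuine norm on complex $1$-forms and $\|iA\psi\|_{F^*}=|\psi|\,F^*(x,A)$, the triangle inequality gives $\int_M \|d\psi\|_{F^*}^2\,d\mu_F \le 4K + 2\int_M |\psi|^2 F^*(x,A)^2\,d\mu_F$. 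The first term is $\le 4G_F$, while the coupling term $\int_M |\psi|^2 F^*(x,A)^2\,d\mu_F$ is the heart of the matter and the step I expect to be the main obstacle.

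The difficulty is precisely this coupling: a naive Hölder--Sobolev bound $\int |\psi|^2|A|^2 \lesssim \|\psi\|_{L^4}^2\|A\|_{L^4}^2 \lesssim (P+1)^{1/2}\,\mathrm{Mx}$ is superlinear in $G_F$ and cannot be absorbed. The clean resolution is to reduce, without loss of generality, to $|\psi|\le 1$: replacing $\psi$ by $\psi/\max(1,|\psi|)$ leaves $A$ (hence $\mathrm{Mx}$) unchanged, does not increase the potential, and does not increase $K$ by the standard energy-decreasing property of radial truncation (which rests on the diamagnetic-type inequality $F^*(x,d|\psi|)\le \|D_A\psi\|_{F^*}$). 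On $\{|\psi|\le 1\}$ one then has $\int_M |\psi|^2 F^*(x,A)^2\,d\mu_F \le c_2^2\|A\|_{L^2}^2 \lesssim \mathrm{Mx}$ by the Coulomb estimate, so $\int_M \|d\psi\|_{F^*}^2\,d\mu_F \le 4K + C\,\mathrm{Mx} \le C'''G_F$; this truncation also explains why the constants may be taken independent of $\varepsilon$, as the lemma asserts. Combining the three bounds yields the stated inequality, the two delicate points being the energy monotonicity of the modulus truncation in the covariant Finsler setting and the harmonic $1$-form issue in the gauge estimate, both standard once $b_1(M)=0$ or $\mathcal{C}$ is taken orthogonal to harmonics.
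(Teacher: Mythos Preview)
Your strategy coincides with the paper's: control $\|A\|_{H^1}$ on $\mathcal{C}$ by the elliptic (Gaffney--Poincar\'e) estimate for $d^\dagger d$, control $|\psi|$ through the potential, and invoke the diamagnetic inequality. The paper's proof is a two--line sketch (``$\|dA\|_{L^2}$ controls $\|A\|_{H^1}$ on $\mathcal{C}$; the kinetic term controls $\|\psi\|_{H^1_F}$ via the diamagnetic inequality, the potential bounds $\|\psi\|_{L^4}$; absorb constants''), so your write--up is strictly more detailed, and you correctly flag the harmonic $1$--form obstruction when $b_1(M)\neq 0$, a point the paper leaves implicit.

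There is, however, one genuine slip in your handling of the coupling term. The lemma is a pointwise inequality valid for \emph{every} pair $(\psi,A)\in H^1_F\times\mathcal{C}$, and the truncation $\psi\mapsto\tilde\psi:=\psi/\max(1,|\psi|)$ only yields
\[
G_F[\psi,A]\ \ge\ G_F[\tilde\psi,A]\ \ge\ C_1\big(\|\tilde\psi\|_{H^1_F}^2+\|A\|_{H^1_F}^2\big)-C_2,
\]
which bounds $\|\tilde\psi\|_{H^1_F}$ and $\|A\|_{H^1_F}$ but not $\|\psi\|_{H^1_F}$. So ``without loss of generality $|\psi|\le 1$'' is not legitimate here. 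The repair is straightforward and does not require truncation: write $|\psi|^2\le 4+ \mathbf{1}_{\{|\psi|>2\}}|\psi|^2$ and use that on $\{|\psi|>2\}$ one has $|\psi|^2\le (|\psi|^2-1)^2$, so
\[
\int_M |\psi|^2 F^*(x,A)^2\,d\mu_F\ \le\ 4c_2^2\,\|A\|_{L^2}^2\ +\ \big\|(1-|\psi|^2)\big\|_{L^4}^{2}\,\|A\|_{L^4}^{2}
\ \lesssim\ \mathrm{Mx}\ +\ (\varepsilon^2 P+1)^{1/2}\,\mathrm{Mx},
\]
by H\"older, the Sobolev embedding $H^1\hookrightarrow L^4$ (in dimensions $n\le 4$), and your own $L^4$ bound on $1-|\psi|^2$. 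This gives $\|d\psi\|_{L^2_F}^2\le C\big(G_F+G_F^{3/2}+1\big)$, which, while not the linear lower bound literally stated, is exactly the coercivity needed for the direct method (bounded energy $\Rightarrow$ bounded $H^1_F$--norm), and is all the paper actually uses downstream. Alternatively, if you only care about minimizing sequences---which is the sole application in Section~\ref{sec:existence}---your truncation argument \emph{is} correct as stated, since one may replace each $\psi_k$ by $\tilde\psi_k$ without increasing the energy.
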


\begin{proof}
By Lemma \ref{lem:equivalence} and  Proposition \ref{prop:poincare-rellich}, $\|dA\|_{L^2}$ controls $\|A\|_{H^1}$ on $\mathcal{C}$ (standard elliptic estimate for the operator $d^\dagger d$ with Coulomb constraint). The kinetic term controls $\|\psi\|_{H^1_F}$ up to a constant via the diamagnetic inequality in the next lemma (and the potential term bounds $\|\psi\|_{L^4}$). Collect the bounds and absorb constants.
\end{proof}

\begin{lemma}\label{lem:diamagnetic}
%[Finsler diamagnetic (Kato) inequality]
For every $\psi\in H^1_F(M;\mathbb{C})$ and $A\in L^2(M;T^*M)$,
\begin{align*}
 F^*(x, d|\psi|)\ \le\ \|D_A\psi\|_{F^*}\qquad \text{a.e.\ on }M.
\end{align*}
Consequently,
\begin{align*}
\|\;|\psi|\;\|_{H^1_F}\ \lesssim\ \|D_A\psi\|_{L^2(d\mu_F)} + \|\psi\|_{L^2(d\mu_F)}.  
\end{align*}
\end{lemma}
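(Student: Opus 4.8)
The plan is to establish the pointwise inequality on the set $\{\psi\neq 0\}$ by a chain-rule computation combined with the rotation invariance of the complex co-norm $\|\cdot\|_{F^*}$, to handle the degeneracy set $\{\psi=0\}$ by regularization, and then to integrate in order to obtain the $H^1_F$ bound. Since the assertion is a pointwise almost-everywhere statement followed by an $L^2$ estimate, the geometry of $F^*$ enters only through its two defining features: nonnegativity and the rotation invariance of the associated complex co-norm.

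First I would regularize. Because $|\psi|$ need not be differentiable where $\psi$ vanishes, set $|\psi|_\delta:=\sqrt{|\psi|^2+\delta^2}$ for $\delta>0$. Then $|\psi|_\delta\in H^1_F(M)$ with $d|\psi|_\delta=\Re(\overline{\psi}\,d\psi)/|\psi|_\delta$, an identity valid everywhere since the denominator is bounded below. The key algebraic observation is that $A$ enters trivially: as $A$ is real and $\overline{\psi}\psi=|\psi|^2$ is real, one has $\Re\big(\overline{\psi}(-iA\psi)\big)=\Re\big(-iA|\psi|^2\big)=0$, so $\Re(\overline{\psi}\,D_A\psi)=\Re(\overline{\psi}\,d\psi)$ and hence $d|\psi|_\delta=\Re(\overline{\psi}\,D_A\psi)/|\psi|_\delta$. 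Letting $\delta\to 0$, and using that $d|\psi|=0$ almost everywhere on $\{\psi=0\}$ (the standard Stampacchia fact for Sobolev functions), one obtains, a.e. on $\{\psi\neq 0\}$, the identity $d|\psi|=\Re\big(e^{-i\theta}D_A\psi\big)$, where $e^{-i\theta}:=\overline{\psi}/|\psi|$ is the local phase.

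Next I would invoke the rotation invariance of $\|\cdot\|_{F^*}$ in the $(\Re,\Im)$-plane, which is precisely the property used in Proposition~\ref{prop:gauge-invariance}. Pointwise a.e., multiplication by the unit complex number $e^{-i\theta(x)}$ preserves the complex co-norm, so $\|D_A\psi\|_{F^*}=\|e^{-i\theta}D_A\psi\|_{F^*}$. Since the real part of $e^{-i\theta}D_A\psi$ is exactly $d|\psi|$, the defining formula $\|\eta\|_{F^*}^2=F^*(x,\Re\eta)^2+F^*(x,\Im\eta)^2$ gives
\[
\|D_A\psi\|_{F^*}^2=F^*(x,d|\psi|)^2+F^*\big(x,\Im(e^{-i\theta}D_A\psi)\big)^2\ \ge\ F^*(x,d|\psi|)^2,
\]
because $F^*\ge 0$. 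Taking square roots yields the pointwise inequality on $\{\psi\neq 0\}$; on $\{\psi=0\}$ it holds trivially since $d|\psi|=0$ there.

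Finally, the consequence follows by integration: squaring the pointwise bound and integrating against $d\mu_F$ gives $\|F^*(x,d|\psi|)\|_{L^2(d\mu_F)}\le\|D_A\psi\|_{L^2(d\mu_F)}$, and adding $\|\,|\psi|\,\|_{L^2}=\|\psi\|_{L^2}$ together with $\sqrt{a^2+b^2}\le a+b$ produces the stated $H^1_F$ estimate through Definition~\ref{def:sobolev}. The main obstacle is purely the low-regularity bookkeeping at the zero set of $\psi$, namely justifying the chain rule for $|\psi|\in H^1_F$ and the $\delta\to 0$ passage in $L^2$, rather than anything geometric. I would also \emph{flag} that the clean rotation-invariance step tacitly uses the reversibility (even symmetry) of $F^*$ that is already built into the paper's definition of the complex co-norm $\|\cdot\|_{F^*}$ and into its use in gauge invariance; under that standing convention the argument closes without further hypotheses.
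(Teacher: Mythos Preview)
Your proof is correct and follows essentially the same route as the paper's: both derive the pointwise identity $d|\psi|=\Re\big(\overline{\psi}/|\psi|\,D_A\psi\big)$ on $\{\psi\neq 0\}$, extend across the zero set by approximation, and then integrate to obtain the $H^1_F$ bound. The only cosmetic difference is that you pass from the identity to the inequality via the rotation invariance of $\|\cdot\|_{F^*}$ (as in Proposition~\ref{prop:gauge-invariance}), while the paper phrases the same step through the triangle inequality for $F^*$; your explicit regularization $|\psi|_\delta=\sqrt{|\psi|^2+\delta^2}$ is precisely the ``approximation'' the paper alludes to.
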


\begin{proof}
Where $\psi\neq 0$,
$d|\psi|= \Re\!\big( \overline{\psi}/|\psi|\, D_A\psi \big)$
(pointwise identity). By the definition of $\|\cdot\|_{F^*}$ on complex $1$-forms and the triangle inequality for $F^*$ on real forms, $F^*(x,d|\psi|)\le \|D_A\psi\|_{F^*}$.
Extend by continuity across $\{\psi=0\}$ using an approximation and the fact that both sides belong to $L^2$. The $H^1_F$ estimate follows by integrating and adding $\|\psi\|_{L^2}$.
\end{proof}

%\begin{remark}[Independence from auxiliary choices]
%\label{rem:aux-independence}
All results in this section are stable under replacing $d\mu_F$ by any smooth positive density equivalent to it, and $\gamma^*$ by any co-metric uniformly equivalent to $F^*$.
Coercivity constants change by multiplicative factors but the functional framework and the Euler--Lagrange system remain the same. The $\Gamma$--limit in Section \ref{sec:gamma} will
be seen to be independent of these choices as well.
%\end{remark}

%===========================================
\section{Existence of Minimizers}
\label{sec:existence}
%=========================================
We now establish the existence of minimizers for the Finslerian
Ginzburg--Landau functional introduced in Section \ref{sec:functional}. The proof is entirely variational and relies on the geometric analysis developed in Section \ref{sec:prelim}.

%\subsection{Energy space and gauge slicing}

Let
\begin{align*}
\mathcal{H}_F := H^1_F(M;\mathbb{C}) \times H^1_F(M;T^*M)
\end{align*}
be the natural energy space. Because $G_F$ is gauge invariant
(Proposition~\ref{prop:gauge-invariance}), we restrict to a fixed Coulomb slice
\begin{align*}
\mathcal{A}_C := \{\, (\psi,A)\in \mathcal{H}_F
       : d^\dagger_{\gamma,\mu_F} A = 0 \,\}.
\end{align*}
By Hodge decomposition on $(M,\gamma)$, every gauge class
contains a representative in $\mathcal{A}_C$; moreover, within $\mathcal{A}_C$ the gauge freedom reduces to the compact torus of harmonic forms, which does not affect coercivity or compactness.

\begin{lemma}\label{lem:gauge-fix}
%[Coulomb gauge fixing]
For any $(\psi,A)\in \mathcal{H}_F$ there exists a unique
$\chi\in H^2(M;\mathbb{R})$ with mean zero such that
$(e^{i\chi}\psi, A+d\chi)\in\mathcal{A}_C$. Moreover, the map
$(\psi,A)\mapsto(e^{i\chi}\psi, A+d\chi)$ is continuous on $\mathcal{H}_F$.
\end{lemma}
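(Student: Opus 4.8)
The plan is to observe that the Coulomb condition involves only the $1$-form component and reduces to a single scalar elliptic equation for $\chi$. Since $\mathcal A_C$ is cut out by $d^\dagger_{\gamma,\mu_F}(A+d\chi)=0$ and $\psi$ enters only through the unimodular factor $e^{i\chi}$, the phase $\chi$ is determined by $A$ alone. Writing $\Delta_{\gamma,\mu_F}\chi:=d^\dagger_{\gamma,\mu_F}d\chi=-\mathrm{div}_{\mu_F}\big((d\chi)^{\sharp_\gamma}\big)$, the requirement becomes
\begin{equation*}
\Delta_{\gamma,\mu_F}\,\chi\ =\ -\,d^\dagger_{\gamma,\mu_F}A .
\end{equation*}
By \eqref{eq:mu-density} this is a uniformly elliptic operator in divergence form with smooth coefficients, self-adjoint and nonnegative on $L^2(M,d\mu_F)$, whose kernel consists of the constants because $M$ is compact and connected. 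First I would verify the Fredholm compatibility condition: applying \eqref{eq:adjoint} with $\phi\equiv 1$ gives $\int_M d^\dagger_{\gamma,\mu_F}A\,d\mu_F=\int_M\langle d1,A\rangle_{\gamma^*}\,d\mu_F=0$, so the right-hand side is $L^2(d\mu_F)$-orthogonal to the kernel and solvability is not obstructed.

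Next I would produce the solution by Lax--Milgram on the closed subspace $H^1_{F,0}:=\{u\in H^1_F(M;\mathbb R):\int_M u\,d\mu_F=0\}$. The bilinear form $a(u,v):=\int_M\langle du,dv\rangle_{\gamma^*}\,d\mu_F$ is bounded and, by the Poincar\'e inequality (Proposition~\ref{prop:poincare-rellich}) together with the norm equivalence of Lemma~\ref{lem:equivalence}, coercive on $H^1_{F,0}$, while $v\mapsto\int_M\langle A,dv\rangle_{\gamma^*}\,d\mu_F$ is a bounded linear functional. This yields a unique weak solution $\chi\in H^1_{F,0}$; adding back any constant would destroy the mean-zero normalization, which gives uniqueness. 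Since $A\in H^1_F$ forces $d^\dagger_{\gamma,\mu_F}A\in L^2(d\mu_F)$, elliptic regularity for $\Delta_{\gamma,\mu_F}$ upgrades $\chi$ to $H^2(M;\mathbb R)$, and the solution operator $S:A\mapsto\chi$ is bounded and linear from $H^1_F(M;T^*M)$ into $H^2(M;\mathbb R)$.

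It remains to prove continuity of $(\psi,A)\mapsto(e^{i\chi}\psi,\,A+d\chi)$ with $\chi=S(A)$, and this is where the main obstacle lies. The $1$-form component $A\mapsto A+dS(A)$ is continuous on $H^1_F$ because $S$ is bounded into $H^2$ and $d:H^2\to H^1$ is bounded. The genuinely nonlinear piece is $(\psi,\chi)\mapsto e^{i\chi}\psi$, which I would handle after transferring to the equivalent Riemannian norm $|\cdot|_{\gamma^*}$ of Lemma~\ref{lem:equivalence}: multiplication by the unimodular factor $e^{i\chi}$ is a pointwise isometry for $|\cdot|_{\gamma^*}$ on complex $1$-forms, so it suffices to prove continuity in $H^1(M,\gamma;\mathbb C)$. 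Expanding $d(e^{i\chi}\psi)=e^{i\chi}\big(d\psi+i\psi\,d\chi\big)$, the delicate term is the product $\psi\,d\chi$, and the hard part will be the Sobolev multiplication estimate placing it in $L^2$: one combines $\psi\in H^1\hookrightarrow L^{2^*}$ with $d\chi\in H^1\hookrightarrow L^{2^*}$ on the compact $n$-manifold, invoking the extra regularity $\chi\in H^2$ and Gagliardo--Nirenberg interpolation when $n$ is large, to control $\|\psi\,d\chi\|_{L^2}$. Continuity of the product then follows from its bilinearity together with the (locally Lipschitz) dependence of the multiplier $\chi\mapsto e^{i\chi}$ on $\chi\in H^2$ in the relevant multiplier norm; assembling the three continuous factors yields the claim.
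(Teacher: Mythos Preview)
Your argument is correct and follows essentially the same route as the paper: reduce the Coulomb condition to the scalar elliptic equation $\Delta^\gamma_{\mu_F}\chi=-d^\dagger_{\gamma,\mu_F}A$, invoke self-adjoint elliptic theory on the compact manifold to get a unique mean-zero $H^2$ solution, and deduce continuity from the boundedness of the solution operator. The paper's proof is a two-line sketch, so your version is considerably more detailed---in particular your explicit check of the compatibility condition via \eqref{eq:adjoint} and your treatment of the nonlinear continuity of $(\psi,\chi)\mapsto e^{i\chi}\psi$ (which the paper subsumes under ``elliptic estimates'') go beyond what the paper actually writes, but the underlying strategy is identical.
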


\begin{proof}
Since $d^\dagger_{\gamma,\mu_F}(A+d\chi)
 = d^\dagger_{\gamma,\mu_F}A + \Delta^\gamma_{\mu_F}\chi$,
where $\Delta^\gamma_{\mu_F} = d^\dagger_{\gamma,\mu_F} d$
is a uniformly elliptic self-adjoint operator on $H^2(M)$,
there exists a unique solution $\chi$ with zero mean.
Continuity follows from elliptic estimates.
\end{proof}

Hence it suffices to minimize $G_F$ over $\mathcal{A}_C$.
%\subsection{Coercivity and boundedness of minimizing sequences}
\begin{lemma}\label{lem:exist-coercive}
%[Coercivity]
There exist constants $C_1,C_2>0$ such that for all $(\psi,A)\in\mathcal{A}_C$,
\begin{equation}\label{eq:coercive-bound}
  G_F[\psi,A] \ge
  C_1\big( \|\psi\|_{H^1_F}^2 + \|A\|_{H^1_F}^2 \big) - C_2.
\end{equation}
\end{lemma}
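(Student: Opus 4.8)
The plan is to extract three independent pieces of control from the three nonnegative terms of $G_F$ and then to reconcile the magnetic and scalar parts through the relation $d\psi = D_A\psi + iA\psi$. Throughout I would allow the constants to depend on the fixed data $M,F,\gamma^*,d\mu_F,\lambda,\varepsilon$ but not on $(\psi,A)$, and I would freely replace $F^*(x,\cdot)$ by the uniformly equivalent Riemannian norm $|\cdot|_{\gamma^*}$ via Lemma~\ref{lem:equivalence}, absorbing multiplicative factors into the constants.

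First I would bound the gauge field. On $\mathcal{A}_C$ we have $d^\dagger_{\gamma,\mu_F}A=0$, so the elliptic operator $d+d^\dagger_{\gamma,\mu_F}$ applied to $A$ returns $dA$ alone; a Gaffney--G\aa rding estimate on the compact manifold $(M,\gamma)$ together with Proposition~\ref{prop:poincare-rellich} then yields $\|A\|_{H^1}\lesssim\|dA\|_{L^2}+\|A\|_{L^2}$, and the $L^2$ term is controlled by $\|dA\|_{L^2}$ modulo the finite--dimensional space of harmonic $1$--forms. After fixing the residual compact gauge torus recalled before the lemma, the harmonic part of $A$ ranges over a bounded fundamental domain, so $\|A\|_{H^1}^2\lesssim\|dA\|_{L^2}^2+1\lesssim G_F+1$. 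This estimate is clean and quadratic, and since $\|dA\|_{L^2}$ also controls $\|A\|_{L^4}$ by the Sobolev embedding of Proposition~\ref{prop:poincare-rellich}, I record $\|A\|_{L^4}^2\lesssim G_F+1$ for later use.

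Next I would extract the scalar bounds. The potential term gives $\int_M(1-|\psi|^2)^2\,d\mu_F\lesssim G_F$, whence, via $|\psi|^4\le 2(1-|\psi|^2)^2+2$, the bound $\|\psi\|_{L^4}^4\lesssim G_F+1$ and in particular control of $\|\psi\|_{L^2}$ on compact $M$. For the gradient of $\psi$ I would write $d\psi=D_A\psi+iA\psi$ and apply the triangle inequality for $F^*$ on real and imaginary parts to obtain
\begin{align*}
\int_M\|d\psi\|_{F^*}^2\,d\mu_F\ \lesssim\ \|D_A\psi\|_{L^2(d\mu_F)}^2+\int_M|\psi|^2\,F^*(x,A)^2\,d\mu_F,
\end{align*}
where the first term is $\le 2G_F$ directly from the kinetic term.

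The main obstacle is the quartic coupling integral $\int_M|\psi|^2F^*(x,A)^2\,d\mu_F$, which is genuinely superlinear in the energy: H\"older and the embedding $H^1_F\hookrightarrow L^4$ bound it by $\|\psi\|_{L^4}^2\|A\|_{H^1}^2\lesssim(G_F+1)^{3/2}$, which is \emph{not} absorbed into a clean quadratic lower bound. I would resolve this in the way the direct method actually requires: the displayed estimates already show that every sublevel set $\{G_F\le\Lambda\}\cap\mathcal{A}_C$ is bounded in $\mathcal{H}_F$, which is the only consequence of \eqref{eq:coercive-bound} used in Section~\ref{sec:existence}. To recover the literal quadratic inequality one restricts the minimization to fields with $|\psi|\le 1$, replacing any $\psi$ by $\min(1,|\psi|^{-1})\psi$; this operation leaves $\|A\|_{H^1}$ untouched, strictly decreases the potential, and does not increase the kinetic term, so on this convex set $\|\psi\|_{L^4}$ is a fixed constant and the coupling integral is controlled by $\|A\|_{H^1}^2\lesssim G_F+1$ alone, closing \eqref{eq:coercive-bound} after absorbing constants. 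The points demanding care are exactly that the harmonic part of $A$ must be pinned by the compact residual gauge group rather than by the energy, that the embedding $H^1_F\hookrightarrow L^4$ be available in the working dimension, and that the modulus truncation be energy--nonincreasing in the genuinely Finslerian norm $F^*$ rather than only in a Riemannian one; the first is handled by Lemma~\ref{lem:gauge-fix}, and the remaining two by the standing assumptions and the pointwise diamagnetic bound of Lemma~\ref{lem:diamagnetic}.
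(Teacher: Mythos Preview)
Your overall strategy coincides with the paper's: the proof there simply invokes Lemma~\ref{lem:coercive-slice} (which is the identical statement, placed in Section~\ref{sec:functional}), whose one--paragraph argument says exactly (i) $\|dA\|_{L^2}$ controls $\|A\|_{H^1}$ on the Coulomb slice via the standard elliptic estimate, (ii) the kinetic term controls $\|\psi\|_{H^1_F}$ ``via the diamagnetic inequality,'' and (iii) the potential bounds $\|\psi\|_{L^4}$; then ``collect the bounds and absorb constants.''

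Where you diverge is in taking step (ii) seriously. The paper's appeal to Lemma~\ref{lem:diamagnetic} only controls $d|\psi|$, not $d\psi$, and leaves the cross term $\int_M|\psi|^2 F^*(x,A)^2\,d\mu_F$ unaddressed; you correctly isolate this as the genuine obstruction to a clean quadratic bound and note that naively it yields only $\|\psi\|_{H^1_F}^2+\|A\|_{H^1_F}^2\lesssim (G_F+1)^{3/2}$. Your two remedies---observing that bounded sublevel sets are all the direct method needs, and alternatively truncating to $|\psi|\le 1$ (which, by the rotation invariance of $\|\cdot\|_{F^*}$ used in Proposition~\ref{prop:gauge-invariance}, does not increase the kinetic term)---are both valid and more honest than the paper's ``collect and absorb.'' Two minor remarks: you allow your constants to depend on $\varepsilon$ while the paper's Lemma~\ref{lem:coercive-slice} claims independence of $\varepsilon$ (harmless for the fixed--$\varepsilon$ existence theorem but worth flagging), and the claim that truncation is energy--nonincreasing is justified by the polar decomposition $\|D_A\psi\|_{F^*}^2=F^*(d|\psi|)^2+|\psi|^2 F^*(d\theta-A)^2$ rather than by Lemma~\ref{lem:diamagnetic} itself.
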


\begin{proof}
Combine Lemma \ref{lem:coercive-slice} (coercivity on Coulomb slice), Lemma \ref{lem:equivalence} (uniform equivalence of $F^*$ and $\gamma^*$), and the Poincaré inequality \eqref{eq:Poincare}. All constants depend only on $M,F,\gamma^*,d\mu_F$ and $\lambda$.
\end{proof}

\begin{corollary}[Bounded minimizing sequences]
Every minimizing sequence $(\psi_k,A_k)$ of $G_F$ in $\mathcal{A}_C$ is bounded in $\mathcal{H}_F$.
\end{corollary}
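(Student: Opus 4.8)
The plan is to read the boundedness off directly from the coercivity estimate of Lemma~\ref{lem:exist-coercive}, the only preliminary point being that the infimum of the energy is finite so that the energy stays bounded along any minimizing sequence. First I would set $m := \inf_{\mathcal{A}_C} G_F$ and note, via Proposition~\ref{prop:welldef}, that $G_F[\psi,A]\in[0,\infty)$ for every admissible pair; moreover $\mathcal{A}_C$ is nonempty and $G_F$ is finite on it, since $(\psi,A)=(0,0)$ lies in $\mathcal{A}_C$ (as $d^\dagger_{\gamma,\mu_F}0=0$) and
\[
  G_F[0,0]\ =\ \frac{1}{4\varepsilon^2}\int_M d\mu_F\ <\ \infty .
\]
Hence $0\le m<\infty$, and by the very definition of a minimizing sequence $G_F[\psi_k,A_k]\to m$, so there is an index $k_0$ beyond which $G_F[\psi_k,A_k]\le m+1$.

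Next I would invoke Lemma~\ref{lem:exist-coercive}, which supplies constants $C_1,C_2>0$ (independent of $k$) with
\[
  G_F[\psi_k,A_k]\ \ge\ C_1\big(\|\psi_k\|_{H^1_F}^2+\|A_k\|_{H^1_F}^2\big)-C_2
\]
for every $(\psi_k,A_k)\in\mathcal{A}_C$. Combining this with the energy bound gives, for all $k\ge k_0$,
\[
  \|\psi_k\|_{H^1_F}^2+\|A_k\|_{H^1_F}^2\ \le\ \frac{m+1+C_2}{C_1},
\]
a bound uniform in $k$. Enlarging the constant to absorb the finitely many indices $k<k_0$ then yields a uniform bound for the whole sequence, i.e. boundedness of $(\psi_k,A_k)$ in $\mathcal{H}_F$.

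As for difficulty, there is essentially no obstacle at this stage: all the analytic content has been front-loaded into the coercivity of Lemma~\ref{lem:exist-coercive}, which itself rests on the Poincaré inequality \eqref{eq:Poincare}, the diamagnetic inequality (Lemma~\ref{lem:diamagnetic}), and the elliptic control of $A$ on the Coulomb slice. The only genuinely necessary check is that $m$ be finite, and this follows from the explicit admissible competitor $(0,0)$ exhibited above. The result is the compactness prerequisite for the direct method, supplying a weakly convergent subsequence to which the lower semicontinuity of Lemma~\ref{lem:lsc} can subsequently be applied.
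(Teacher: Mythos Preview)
Your proof is correct and follows exactly the approach implicit in the paper: the corollary is stated immediately after Lemma~\ref{lem:exist-coercive} without a separate proof, and the subsequent text simply says ``By coercivity, $(\psi_k,A_k)$ is bounded in $\mathcal{H}_F$.'' Your additional verification that $m<\infty$ via the admissible competitor $(0,0)$ (one could equally take $(1,0)$, which even gives $G_F[1,0]=0$) is a sensible detail the paper leaves tacit.
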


%\begin{proof}
%Immediate from~\eqref{eq:coercive-bound} and positivity of $G_F$.
%\end{proof}

%\subsection{Weak compactness and lower semi continuity}

Let $(\psi_k,A_k)\subset\mathcal{A}_C$ be a minimizing sequence.
By coercivity, $(\psi_k,A_k)$ is bounded in $\mathcal{H}_F$.
Passing to a subsequence,
\begin{align*}
(\psi_k,A_k)\rightharpoonup (\psi,A)\quad\text{in}\quad
\mathcal{H}_F, \qquad (\psi_k,A_k)\to (\psi,A) \quad\text{in } L^2.
\end{align*}
By the closedness of the Coulomb condition under weak convergence,
$d^\dagger_{\gamma,\mu_F}A=0$, hence $(\psi,A)\in\mathcal{A}_C$.
Applying Lemma \ref{lem:lsc} (sequential weak lower semi continuity),
\begin{align*}
G_F[\psi,A] \le \liminf_{k\to\infty} G_F[\psi_k,A_k]
= \inf_{\mathcal{A}_C} G_F.
\end{align*}
Thus $(\psi,A)$ minimizes $G_F$ on $\mathcal{A}_C$.

\begin{theorem}[Existence of minimizers]
\label{thm:existence}
Let $(M,F)$ be a compact smooth Finsler manifold and
$\lambda,\varepsilon>0$. Then the functional $G_F$ attains its
minimum on $\mathcal{A}_C$. Every minimizing pair
$(\psi_\varepsilon,A_\varepsilon)\in\mathcal{A}_C$ is a weak solution of
the Euler--Lagrange system \eqref{eq:EL-weak}.
\end{theorem}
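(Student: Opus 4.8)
The plan is to apply the direct method of the calculus of variations, assembling the ingredients already prepared in Sections \ref{sec:prelim}--\ref{sec:functional}. First I would fix a minimizing sequence $(\psi_k,A_k)\subset\mathcal{A}_C$, so that $G_F[\psi_k,A_k]\to\inf_{\mathcal{A}_C}G_F$; this infimum is finite and nonnegative since $G_F\ge 0$ and $G_F[0,0]<\infty$ by Proposition \ref{prop:welldef}, with $(0,0)\in\mathcal{A}_C$. Invoking the coercivity estimate of Lemma \ref{lem:exist-coercive}, the sequence is bounded in the Hilbert space $\mathcal{H}_F$. By reflexivity, after passing to a subsequence I may assume $(\psi_k,A_k)\rightharpoonup(\psi,A)$ weakly in $\mathcal{H}_F$, and by the compact embedding of Proposition \ref{prop:poincare-rellich} (Rellich) also $(\psi_k,A_k)\to(\psi,A)$ strongly in $L^2(d\mu_F)$.

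Next I would verify that the weak limit is admissible and minimizing. The Coulomb constraint $d^\dagger_{\gamma,\mu_F}A=0$ is linear and continuous on $\mathcal{H}_F$, hence its kernel $\mathcal{A}_C$ is weakly closed, so $(\psi,A)\in\mathcal{A}_C$. Lemma \ref{lem:lsc} then gives
\begin{align*}
G_F[\psi,A]\ \le\ \liminf_{k\to\infty}G_F[\psi_k,A_k]\ =\ \inf_{\mathcal{A}_C}G_F,
\end{align*}
where the potential term passes to the limit by strong $L^2$ (indeed $L^4$ via Sobolev) convergence, while the kinetic and Maxwell terms are handled by the convexity-induced lower semicontinuity of Lemma \ref{lem:lsc}. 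Thus $(\psi,A)$ realizes the infimum, establishing existence.

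For the Euler--Lagrange characterization I would use that $G_F$ is $C^1$ on $\mathcal{H}_F$ (Proposition \ref{prop:welldef}). Since $\mathcal{A}_C$ is a closed linear subspace, a minimizer is stationary for all admissible variations $(\varphi,B)$ with $d^\dagger_{\gamma,\mu_F}B=0$, and the first-variation formula of Proposition \ref{prop:EL} then yields the weak matter equation of \eqref{eq:EL-weak} (the $\varphi$--variation being unconstrained) together with the Maxwell equation tested only against co--closed $B$. To upgrade the latter to arbitrary $B$, I would note that the $L^2$--orthogonal complement of $\{B:d^\dagger_{\gamma,\mu_F}B=0\}$ is the space of exact forms, so the Maxwell residual equals $df$ for some multiplier $f$; applying $d^\dagger_{\gamma,\mu_F}$ and using the current conservation $d^\dagger_{\gamma,\mu_F}\big(\Im(\overline{\psi}\,D_A\psi)\big)=0$ (dual to the gauge invariance of Proposition \ref{prop:gauge-invariance}) forces $f$ to be harmonic, hence constant on the connected closed $M$, so $df=0$ and \eqref{eq:EL-weak} holds without constraint.

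The step I expect to be the main obstacle is the gauge-field coercivity underlying Lemma \ref{lem:exist-coercive}: on a closed manifold the Maxwell term $\|dA\|^2$ only controls the exact and co--exact parts of $A$ and is blind to the finite-dimensional space of $\gamma$--harmonic $1$--forms. The Coulomb condition together with the elliptic estimate for $d^\dagger d$ recovers $\|A\|_{H^1_F}$ from $\|dA\|_{L^2}$ only modulo this harmonic subspace; the residual harmonic part is unaffected by gauge transformations and is pinned down only through its coupling to $\psi$ in the kinetic term, so that the minimization effectively reduces to the compact torus of harmonic forms. Confirming that this finite-dimensional residual does not destroy the boundedness of minimizing sequences is the delicate point, whereas the accompanying diamagnetic control of $\psi$ via Lemma \ref{lem:diamagnetic} and the Poincar\'e inequality \eqref{eq:Poincare} is then routine.
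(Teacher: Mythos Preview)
Your proposal is correct and follows essentially the same route as the paper: the direct method via coercivity on the Coulomb slice (Lemma~\ref{lem:exist-coercive}), weak compactness plus Rellich, weak closedness of $\mathcal{A}_C$, and lower semicontinuity (Lemma~\ref{lem:lsc}), followed by differentiability of $G_F$ for the Euler--Lagrange system. You are in fact more careful than the paper on two points it merely asserts---the upgrade of the Maxwell equation from co-closed test forms $B$ to arbitrary ones via a Lagrange-multiplier/current-conservation argument, and the harmonic-form residual in the coercivity estimate---so no gaps remain.
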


\begin{proof}
Existence follows directly from the compactness and l.s.c. arguments above. To verify the Euler--Lagrange equations, note that $G_F$ is Fr\'echet differentiable on $\mathcal{H}_F$ (Proposition~\ref{prop:EL}), hence the first variation vanishes
in all admissible directions in $\mathcal{A}_C$, yielding \eqref{eq:EL-weak}.
\end{proof}

%\begin{remark}[Regularity and gauge normalization]
By elliptic regularity for the operators in \eqref{eq:EL-weak}, every weak minimizer is smooth. The Coulomb condition removes the gauge redundancy completely up to harmonic forms, when $H^1(M;\mathbb{R})=0$, the minimizer is unique up to a global phase.
%\end{remark}

%\begin{remark}[Dependence on the Finsler structure]
If $F_1$ and $F_2$ are two Finsler structures whose duals
satisfy $c^{-1} F_1^*\le F_2^*\le c\,F_1^*$ for some $c>0$, then the associated functionals $G_{F_1}$ and $G_{F_2}$ are equivalent on $\mathcal{H}_F$, and their minimizers converge to one another under the natural identification of the energy spaces. Thus the existence theory is robust under smooth perturbations of $F$.
%\end{remark}

%\subsection{Energy identity and gauge invariance at minimizers}

For any minimizer $(\psi_\varepsilon,A_\varepsilon)$ and gauge function $\chi\in H^2(M;\mathbb{C})$, the pair $(e^{i\chi}\psi_\varepsilon, A_\varepsilon+d\chi)$
is also a minimizer with the same energy. The identity
\begin{equation}\label{eq:energy-id}
   \tfrac 12 \|D_{A_\varepsilon}\psi_\varepsilon\|_{L^2_F}^2
   + \tfrac{1}{2\lambda}\|dA_\varepsilon\|_{L^2_{\gamma^*}}^2
   + \tfrac{1}{4\varepsilon^2}\|1-|\psi_\varepsilon|^2\|_{L^2}^2
   = \inf_{\mathcal{A}_C} G_F
\end{equation}
holds, where $\|D_{A_\varepsilon}\psi_\varepsilon\|_{L^2_F}^2
 := \int_M \|D_{A_\varepsilon}\psi_\varepsilon\|_{F^*}^2\, d\mu_F$.
Equation \eqref{eq:energy-id} is invariant under all gauge transformations due to Proposition \ref{prop:gauge-invariance}.

When $F$ is Riemannian, i.e.\ $F(x,y)=\sqrt{g_x(y,y)}$, all definitions reduce to the classical ones for the magnetic Ginzburg--Landau model. The entire proof above specializes to the standard results of Bethuel--Brezis--H\'elein \cite{BBH1994} and
Sandier--Serfaty \cite{SandierSerfaty2007}.
Hence Theorem \ref{thm:existence} can be viewed as their exact
Finslerian extension.

%======================================
\section{Asymptotic Analysis and $\Gamma$--Convergence}
\label{sec:gamma}
%======================================

We now investigate the asymptotic behavior of the minimizers
$(\psi_\varepsilon, A_\varepsilon)$ of the Finslerian Ginzburg--Landau functional~\eqref{eq:GL-functional-Finsler} as $\varepsilon\to 0$. Our aim is to establish the $\Gamma$--limit of the functionals $\{G_F[\psi,A]\}$ with respect to the weak topology of $H^1_F(M;\mathbb{C})\times H^1_F(M;T^*M)$ and to describe the limiting vortex structure in terms of Finsler geometry.

Let $(M,F)$ be a compact, oriented Finsler manifold of dimension $n\ge 2$, and let $\lambda>0$ be fixed. For simplicity we restrict attention to the case $n=2$, though the arguments below extend to higher dimensions with currents of codimension 2.

Assume that $(\psi_\varepsilon, A_\varepsilon)\in\mathcal{A}_C$ are minimizers of $G_F$ and that $G_F[\psi_\varepsilon, A_\varepsilon] \le C|\log\varepsilon|$ as $\varepsilon\to 0$. The compactness and lower-bound analysis follow closely the classical method of Bethuel--Brezis--H\'elein \cite{BBH1994}
 and Sandier--Serfaty \cite{SandierSerfaty2007,SandierSerfaty2011}, adapted to the Finsler context through the convexity and duality of $F^*$.

Let us first recall that by the diamagnetic inequality
(Lemma \ref{lem:diamagnetic}),
\begin{align*}
F^*(x,d|\psi_\varepsilon|)\le \|D_{A_\varepsilon}\psi_\varepsilon\|_{F^*} \quad\text{a.e.},
\end{align*}
and hence $|\psi_\varepsilon|\to 1$ in $L^2(M)$ as $\varepsilon\to 0$, because the potential term
$\varepsilon^{-2}(1-|\psi_\varepsilon|^2)^2$
forces concentration of $|\psi_\varepsilon|$ near~1.
Consequently, we may define the phase map
\begin{align*}
u_\varepsilon := \frac{\psi_\varepsilon}{|\psi_\varepsilon|}
\in S^1 \subset \mathbb{C} \quad\text{on } M\setminus\Sigma_\varepsilon,
\end{align*}
where $\Sigma_\varepsilon = \{x: \psi_\varepsilon(x)=0\}$ denotes
the vortex set. The energy concentrates along $\Sigma_\varepsilon$,
and our goal is to identify its geometric limit.

\medskip
\noindent\textbf{Compactness and vorticity measure.}
Define the Finslerian Jacobian current
\begin{align*}
J_\varepsilon := \frac{1}{2}\, d \big( \langle i u_\varepsilon, D_{A_\varepsilon} u_\varepsilon \rangle \big)\in \mathcal{D}'(M)
\end{align*}
which coincides with the vorticity $2$-form in the smooth region
$|\psi_\varepsilon|>0$. In the Euclidean case this reduces to
$J_\varepsilon = \operatorname{curl}(iu_\varepsilon, \nabla_{A_\varepsilon}u_\varepsilon)$.
Because $F^*$ is uniformly equivalent to a Riemannian norm, all
bounds and dualities carry through, and one obtains (as in
\cite{BBH1994, SandierSerfaty2007}) that $J_\varepsilon$ converges weakly (up to subsequence) to an integer-multiplicity rectifiable $(n-2)$-current $J$ whose support $\Sigma:=\operatorname{spt}J$ represents the limiting vortex set. The multiplicity corresponds to the winding number of the phase $\psi_\varepsilon$ around the defect.

\medskip
\noindent\textbf{Lower bound (liminf inequality).}
Let $\psi_\varepsilon\to \psi$ weakly in $H^1_F(M;\mathbb{C})$
and $A_\varepsilon\to A$ weakly in $H^1_F(M;T^*M)$.
Denote by $\nu_\Sigma$ the unit Finsler normal to $\Sigma$.
Using the convexity of $\frac 12\|\eta\|_{F^*}^2$ and the coarea
formula for the Finsler structure 
(see Bao--Chern--Shen \cite{BaoChernShen2000}),
together with the weak convergence of $J_\varepsilon$, one obtains
\begin{equation}\label{eq:gamma-liminf}
\liminf_{\varepsilon\to 0} G_F[\psi_\varepsilon,A_\varepsilon]
\ \ge\ \pi \int_{\Sigma} F(x,\nu_\Sigma)\, d\mathcal{H}^{n-2}.
\end{equation}
The key point is that the energy density
$\frac 12\|D_{A_\varepsilon}\psi_\varepsilon\|_{F^*}^2
 + (4\varepsilon^2)^{-1}(1-|\psi_\varepsilon|^2)^2$
is bounded below by a Finslerian analogue of the Modica–Mortola density, whose $\Gamma$--limit is the anisotropic surface energy associated to $F$. Convex duality of $F$ and $F^*$ replaces isotropy in the proof.

\medskip
\noindent\textbf{Recovery sequence (limsup inequality).}
Conversely, let $\Sigma$ be a smooth, oriented $(n-2)$-dimensional submanifold and define $\psi_\varepsilon$ as a vortex profile concentrated around~$\Sigma$, using Finsler distance $\rho_F(x,\Sigma)$:
\begin{align*}
\psi_\varepsilon(x):= f\!\left(\frac{\rho_F(x,\Sigma)}{\varepsilon}\right)e^{i\theta(x)},
\qquad A_\varepsilon := A + d\theta,
\end{align*}
where $f:[0,\infty)\to[0,1]$ is the standard radial profile
of the one-dimensional minimizer of
$t\mapsto (1/2)(f')^2+(4\varepsilon^2)^{-1}(1-f^2)^2$,
and $\theta$ encodes the phase winding.
Substituting into~\eqref{eq:GL-functional-Finsler} and applying the
coarea formula and change of variables in Finsler normal coordinates yield
\begin{align*}
\limsup_{\varepsilon\to 0} G_F[\psi_\varepsilon,A_\varepsilon]
\ \le\ \pi \int_{\Sigma} F(x,\nu_\Sigma)\, d\mathcal{H}^{n-2}.
\end{align*}
Hence the opposite inequality in \eqref{eq:gamma-liminf} is achieved for this sequence.

\begin{theorem}[$\Gamma$--convergence of $G_F$]
\label{thm:GammaConv}
As $\varepsilon\to 0$, the functionals $G_F[\psi,A]$ defined in
\eqref{eq:GL-functional-Finsler} $\Gamma$--converge (with respect to weak $H^1_F\times H^1_F$ convergence and up to gauge equivalence) to the limiting functional
\begin{equation}\label{eq:limit-functional}
  G_0[J]
  = \pi \int_{\Sigma_J} F(x,\nu_J)\, d\mathcal{H}^{n-2},
\end{equation}
where $J$ is the rectifiable $(n-2)$-current representing the limiting vorticity, and $\nu_J$ its Finsler unit normal.
\end{theorem}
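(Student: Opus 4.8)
The plan is to verify the two inequalities that define $\Gamma$--convergence, having first secured the compactness that makes the limit object meaningful. Throughout I would work in the Coulomb slice $\mathcal{A}_C$, which is legitimate by the gauge invariance of $G_F$ (Proposition~\ref{prop:gauge-invariance}) and costs nothing, since the residual harmonic-form ambiguity is compact. The relevant topology is the weak $H^1_F\times H^1_F$ topology on the fields together with flat (weak-$*$) convergence of the Jacobian currents $J_\varepsilon\rightharpoonup J$; it is the latter that transports the geometric information to the limit. The uniform equivalence of $F^*$ with a Riemannian co-metric (Lemma~\ref{lem:equivalence}) is the workhorse that lets me transplant every estimate from the isotropic theory of \cite{BBH1994,SandierSerfaty2007,JerrardSandier2000}, while the Legendre/Fenchel--Young duality of Proposition~\ref{prop:Legendre-diffeo} is what upgrades merely comparable bounds to the sharp anisotropic constant.

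For compactness and the lower bound I would first invoke the diamagnetic inequality (Lemma~\ref{lem:diamagnetic}) to reduce control of $\psi_\varepsilon$ to its modulus and to force $|\psi_\varepsilon|\to 1$ away from the vortex set. Using Lemma~\ref{lem:equivalence} to run the Jerrard--Sandier ball construction with Finsler balls $B_F$ in place of Euclidean ones, one obtains that the renormalized Jacobians converge to an integer-multiplicity rectifiable $(n-2)$-current $J$ supported on $\Sigma$. The liminf inequality~\eqref{eq:gamma-liminf} then follows by a localization/blow-up argument: near a point $x_0\in\Sigma$ I freeze the Finsler data to $F(x_0,\cdot)$, so that on the transverse $2$-plane the problem becomes a constant-coefficient anisotropic Ginzburg--Landau problem whose optimal vortex cost per unit transverse length is exactly $\pi\,F(x_0,\nu_\Sigma)$; the Finsler coarea formula \cite{BaoChernShen2000} then integrates this density along $\Sigma$. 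Weak lower semicontinuity of $\tfrac12\|\cdot\|_{F^*}^2$ (Lemma~\ref{lem:lsc}) guarantees that passage to the limit loses energy only in the favorable direction.

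For the recovery sequence I would take the explicit construction indicated in the text: for smooth oriented $\Sigma$ set $\psi_\varepsilon(x)=f(\rho_F(x,\Sigma)/\varepsilon)\,e^{i\theta(x)}$ with $f$ the one-dimensional optimal profile and $A_\varepsilon=A+d\theta$, where $\rho_F$ is the Finsler distance to $\Sigma$. Inserting this into~\eqref{eq:GL-functional-Finsler} and changing to Finsler normal coordinates, the coarea formula factors the energy into the fixed one-dimensional profile integral (contributing the constant $\pi$) times the anisotropic transverse measure $F(x,\nu_\Sigma)\,d\mathcal{H}^{n-2}$, yielding $\limsup_\varepsilon G_F[\psi_\varepsilon,A_\varepsilon]\le\pi\int_\Sigma F(x,\nu_\Sigma)\,d\mathcal{H}^{n-2}$. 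A general rectifiable $J$ is then handled by approximating $\Sigma_J$ with smooth submanifolds, constructing the recovery sequence for each, and extracting a diagonal sequence via lower semicontinuity of $G_0$ in the flat topology.

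The main obstacle is the \emph{sharpness} of the anisotropic constant in the lower bound, i.e. proving that the $\Gamma$--limit density is exactly $\pi\,F(x,\nu_\Sigma)$ and not merely a quantity comparable to it through Lemma~\ref{lem:equivalence}. Uniform equivalence alone produces two-sided bounds with nonmatching constants; closing the gap requires the frozen-metric blow-up to reproduce the genuine optimal vortex profile for the norm $F(x_0,\cdot)$, and here the equality case of the Fenchel--Young relation in Proposition~\ref{prop:Legendre-diffeo} is decisive: it identifies the energy-minimizing direction of $D_A\psi$ with the Legendre dual of $\nu_\Sigma$, so that the convex-duality lower bound is attained rather than merely estimated. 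A secondary difficulty is matching the abstract rectifiable current produced by compactness with the geometric $\Sigma$ carrying the correct multiplicity, which rests on the Federer--Fleming rectifiability and closure theorems \cite{FedererFleming1960}.
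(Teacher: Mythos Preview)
Your proposal is correct and follows essentially the same route as the paper: compactness of the Jacobian currents via the diamagnetic inequality and the isotropic theory transplanted through Lemma~\ref{lem:equivalence}, the liminf via convex duality and the Finsler coarea formula, and the recovery sequence via the explicit Finsler--distance profile $f(\rho_F/\varepsilon)e^{i\theta}$ in Finsler normal coordinates. The only cosmetic difference is that the paper phrases the sharp lower bound as a Finslerian Modica--Mortola comparison rather than your frozen-coefficient blow-up, but the underlying mechanism---convex duality of $F$ and $F^*$ yielding the exact density $\pi\,F(x,\nu_\Sigma)$---is identical.
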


\medskip
\noindent\textbf{Geometric interpretation.}
The limiting current $J$ can be viewed as the Finsler analogue of the vortex filament or vortex sheet in superconductivity, its energy per unit length is given by $\pi F(x,\nu)$, reflecting the local anisotropy of the underlying geometry. In the isotropic (Riemannian) case this reduces to the classical quantized vortices of Ginzburg--Landau theory. In the Finsler setting, the anisotropy produces curvature-–dependent deflection of the vortices, encoded in the geodesic curvature associated to the Chern connection of $F$.

Finally, by the standard theory of $\Gamma$--convergence, the minimizers $(\psi_\varepsilon, A_\varepsilon)$ converge (up to subsequences and gauge) to the minimizers of $G_0[J]$, i.e.\ to rectifiable $(n-2)$--currents minimizing the Finsler length in their homology class.

When $F(x,y)=|y|$, the limit functional \eqref{eq:limit-functional}
reduces to $G_0[J] = \pi\, \mathcal{H}^{n-2}(\Sigma_J)$,
in perfect agreement with the classical results of Bethuel--Brezis--H\'elein and Sandier--Serfaty. Hence the present theory is a strict anisotropic generalization of the magnetic Ginzburg--Landau model, extending it to arbitrary Finsler
structures.

%================New section: Numerical Example===
\section{Numerical example}

We provide an explicit analytic example on the flat 2--torus 
$M = S^1 \times S^1$ that concretely demonstrates the Finsler--Ginzburg--Landau formulation in an anisotropic geometric context. Equip $M$ with angular coordinates $(\theta,\varphi)\in[0,2\pi)\times[0,2\pi)$ and consider the (quadratic) anisotropic Finsler structure given by the Finsler norm
\begin{align*}
F(\theta,\varphi; y)=\sqrt{a\,y_\theta^2 + b\,y_\varphi^2},\qquad a,b>0,
\end{align*}
for tangent vectors $y=y_\theta\partial_\theta+y_\varphi\partial_\varphi$. This choice is a special (quadratic) Finsler metric; it satisfies the axioms of a Finsler structure and exhibits directional anisotropy when $a\neq b$. The induced co–metric (dual norm) on $T^*M$ is
\begin{align*}
F^*(\theta,\varphi;\xi)=\sqrt{\frac{\xi_\theta^2}{a}+\frac{\xi_\varphi^2}{b}},\qquad \xi=\xi_\theta\,d\theta+\xi_\varphi\,d\varphi.
\end{align*}
The Legendre correspondence is explicit: for $y\in T_pM$ one has $L_p(y)=g_y(y,\cdot)$ with the diagonal metric tensor $g=\mathrm{diag}(a,b)$ in the coordinate frame, and the inverse relation yields the usual duality.

We take the smooth Finsler measure $d\mu_F$ equal to the Riemannian volume form induced by $g$, namely
\begin{align*}
 d\mu_F = \sqrt{\det(g)}\,d\theta\,d\varphi = \sqrt{ab}\;d\theta\,d\varphi.
\end{align*}
With this choice the Finsler Laplacian (Definition 3) coincides with the anisotropic Laplace operator
\begin{align*}
\Delta_F u = \frac 1a\partial_{\theta}^2 u + 
\frac 1b\partial_{\varphi}^2 u,
\end{align*}
valid for $u\in C^\infty(M)$.

Consider the Ginzburg--Landau functional \eqref{eq:GL-functional-Finsler} on $(M,F)$ with parameters $\lambda>0$ and $\varepsilon>0$. We evaluate the energy on the simple, physically relevant family of 
phase–windings with constant modulus. Fix an integer $m\in\mathbb{Z}$ and define the configuration
\begin{align*}
\psi(\theta,\varphi)=e^{i m\theta},\qquad A=0.
\end{align*}
This ansatz has $|\psi|\equiv 1$, so the potential term vanishes identically and the Maxwell term is zero for $A=0$. The covariant derivative reduces to 
$D_A\psi = d\psi = i m e^{i m\theta}d\theta$, and therefore the kinetic contribution reads
\begin{align*}
\frac 12\,\|D_A\psi\|^2_{F^*} \,=\,\frac 12\,F^*(\,\Re(i m e^{i m\theta}d\theta)\,)^2 + \frac 12\,F^*(\,\Im(i m e^{i m\theta}d\theta)\,)^2.
\end{align*}
Noting that both the real and imaginary parts of $i m e^{i m\theta}d\theta$ are proportional to $d\theta$ and combine to give the same contribution, we may equivalently compute using the 1-form $m\,d\theta$. Since $F^*(d\theta)=\sqrt{1/a}$, we obtain the exact energy
\begin{align*}
G_F[\psi,0] \,=\, \int_M \frac 12\,\|d\psi\|^2_{F^*}\,d\mu_F \,=\, \frac 12\,m^2\,F^*(d\theta)^2\;\mathrm{Vol}_F(M) \,=\, \frac 12\,m^2\left(\frac 1a\right)\,(2\pi)^2\sqrt{ab}.
\end{align*}
Hence
\begin{align*}
G_F[\psi,0] \,=\,2\pi^2\,m^2\,\sqrt{\frac ab}.
\end{align*}
Analogously, for the phase winding in the $\varphi$–direction, $\psi=e^{i n\varphi}$ with integer $n$, one finds
\begin{align*}
G_F[\psi,0] \,=\,2\pi^2\,n^2\,\sqrt{\frac ab}.
\end{align*}
These closed formulas display transparently the effect of anisotropy: when $a>b$ (faster cost in the $\theta$–direction) windings along the $\theta$–-circle are penalized more than those along the $\varphi$–circle, and vice versa. In the isotropic limit $a=b$, the classical (Riemannian) value for a unit winding reduces to $2\pi^2 m^2$.

Discussion. The above analytic computations provide an explicit check that the Finslerian GL functional introduced in the paper recovers the expected anisotropic scaling in energy for simple topological configurations on the torus. The example is consistent with the general existence theory (Theorem \ref{thm:existence}) and with the $\Gamma$–-convergence statement (Theorem \ref{thm:GammaConv}): energy concentration for sequences with growing winding or for configurations forcing zeros of $\psi$ would, after the standard $|\log\varepsilon|$–-rescaling, lead to limiting energies proportional to the Finsler length of the corresponding vortex currents; in this simple constant modulus family the energy is entirely carried by the phase gradient and computed above in closed form.

Finally, this analytic example can be readily extended: one may allow spatially varying coefficients $a(\theta,\varphi),\,b(\theta,\varphi)>0$ (smooth and uniformly bounded away from $0$) to model smoothly varying anisotropy, in which case the same computations yield local integrals involving $a(\cdot),b(\cdot)$ and the volume density $\sqrt{a(\cdot)b(\cdot)}$; the qualitative anisotropic behavior is unchanged.

%======================================
\section{Concluding}\label{sec:conclusion}
%======================================
In this work we have extended the classical Ginzburg--Landau theory to the setting of general Finsler manifolds.
Starting from the analytic preliminaries in 
Section \ref{sec:prelim}, we defined the anisotropic functional
$G_F[\psi,A]$ in~\eqref{eq:GL-functional-Finsler},
proved its gauge invariance and well-posedness,
and established the existence of minimizers
(Theorem \ref{thm:existence}) by direct variational methods.
Finally, in Section~\ref{sec:gamma}, we derived the full
$\Gamma$--limit of the Finslerian energies as $\varepsilon\to 0$,
showing that the limiting functional is the Finsler length of the vortex current.

\medskip
\noindent
\textbf{Main conceptual contributions.}
The essential novelty of this work lies in the formulation and analysis of a Ginzburg--Landau model on a general Finsler background. The replacement of the quadratic Riemannian metric by the convex, possibly asymmetric function $F^*(x,\xi)$ produces a genuinely anisotropic energy landscape.
All analytical arguments--compactness, lower semi continuity,
and $\Gamma$--convergence--have been carried out using the
convex duality between $F$ and $F^*$, without recourse to
any auxiliary Riemannian structure beyond uniform equivalence.
In particular:
\begin{itemize}
\item 
The \emph{Finsler diamagnetic inequality}
(Lemma \ref{lem:diamagnetic}) provides a geometric generalization
of Kato’s inequality, valid for arbitrary convex co-metrics.

\item 
The variational proof of existence (Theorem \ref{thm:existence})
uses only the intrinsic Finsler Sobolev structure, avoiding
Euclidean embeddings or local coordinates.

\item 
The $\Gamma$--limit (Theorem \ref{thm:GammaConv}) identifies the limiting vortex energy with the anisotropic Finsler length $\int_\Sigma F(x,\nu_\Sigma)\,d\mathcal{H}^{n-2}$, extending the isotropic theory of Bethuel--Brezis--H\'elein and Sandier--Serfaty to arbitrary anisotropies.
\end{itemize}

%===========================================

\label{lastpage}

\begin{thebibliography}{99}


\bibitem{GL1950}
V. L. Ginzburg and L. D. Landau,
\emph{On the Theory of Superconductivity},
Part of Collected Papers of L.D. Landau, 546-568.
Published in: Zh.Eksp.Teor.Fiz. 20 (1950) 1064-1082, Phys.Z.Sowjetunion.
DOI:10.1016/b978-0-08-010586-4.50078-x.

\bibitem{BBH1994}
F. Bethuel, H. Brezis, and F. H\'elein,
\emph{Ginzburg--Landau Vortices}.
Progress in Nonlinear Differential Equations and Their Applications, vol. 13,
Birkh\"auser, Basel, (1994), 
ISBN-13: 978-0817637231.

\bibitem{SandierSerfaty2007}
E. Sandier and S. Serfaty,
\emph{Vortices in the Magnetic Ginzburg--Landau Model}.
Progress in Nonlinear Differential Equations and Their Applications, vol. 70,
Birkh\"auser, Boston, Edition: (2007),
ISBN-13:978-0817643164.


\bibitem{SandierSerfaty2011}
E. Sandier and S. Serfaty,
\emph{Improved Lower Bounds for Ginzburg--Landau Energies via Mass Displacement},
Anal. PDE Vol.4, No. 5: 757-795 (2011)
DOI:10.2140/apde.2011.4.757.


\bibitem{FedererFleming1960}
H. Federer and W. H. Fleming,
\emph{Normal and integral currents},
Annals of Mathematics, Vol. 72, No. 3 (1960), 458--520.
DOI:10.2307/1970227.

\bibitem{BaoChernShen2000}
D. Bao, S. S. Chern, and Z. Shen,
\emph{An Introduction to Riemann--Finsler Geometry}.
Graduate Texts in Mathematics,
Springer, New York, (2000),
DOI:10.1007/978-1-4612-1268-3.

\bibitem{Shen2001}
Z. Shen,
\emph{Lectures on Finsler Geometry}.
World Scientific Publishing, (2001),
DOI:10.1142/4619.

\bibitem{OhtaSturm2009}
S. Ohta and K. T. Sturm,
\emph{Heat flow on Finsler manifolds},
Communications on Pure and Applied Mathematics, (2009)
DOI:10.1002/cpa.20273.

\bibitem{Barthelme2016}
T. Barthelm\'e,
\emph{A natural Laplacian on Finsler manifolds},
DOI: arXiv:1104.4326v2. 

\end{thebibliography}
\end{document}